\documentclass[11pt]{article}

\usepackage{amssymb,amsmath,amsthm}
\usepackage{graphicx}

 \addtolength{\oddsidemargin}{-1.5cm}
 \addtolength{\textwidth}{3cm}
 \addtolength{\topmargin}{-1.5cm}
 \addtolength{\textheight}{3cm}

	\newtheorem{thm}{Theorem}             
	\newtheorem{lem}[thm]{Lemma}
  \DeclareMathOperator{\vis}{\mathrm{visible}}
  
  \DeclareMathOperator{\conv}{\mathrm{conv}}
  \DeclareMathOperator{\per}{\mathrm{per}}
  \newcommand{\eps}{\varepsilon}

\begin{document}

\title{The Visible Perimeter of an Arrangement of Disks\footnote{A preliminary version of this paper appeared in \emph{Graph Drawing 2012} (LNCS 7704, pp. 364--375, 2013).}}

\author{Gabriel Nivasch\footnote{Ariel University, Ariel, Israel. Work done when the author was at EPFL, Lausanne, Switzerland.}\\\texttt{gabrieln@ariel.ac.il} \and J\'anos Pach\footnote{EPFL, Lausanne, Switzerland and R\'enyi Institute, Budapest, Hungary. Work supported by Hungarian Science Foundation EuroGIGA Grant OTKA NN 102029, by Swiss National Science Foundation Grants 200020-144531 and 20021-137574, and by NSF Grant CCF-08-30272.}\\\texttt{pach@cims.nyu.edu}  \and G\'abor Tardos\footnote{R\'enyi Institute, Budapest, Hungary. Work supported by an NSERC grant and by OTKA grants T-046234, AT048826 and NK-62321.}\\\texttt{tardos@renyi.hu}}

\maketitle

\begin{abstract}
Given a collection of $n$ opaque unit disks in the plane, we want to find a stacking order for them that maximizes their {\em visible perimeter}, the total length of
all pieces of their boundaries visible from above.  We prove that if the centers of the disks form a {\em dense} point set,
{\em i.e.}, the ratio of their maximum to their minimum distance
is $O(n^{1/2})$, then there is a stacking order for which the visible
perimeter is $\Omega(n^{2/3})$. We also show that
this bound cannot be improved in the case of a sufficiently small $n^{1/2}\times n^{1/2}$ uniform grid. On the other hand, if the
set of centers is dense and the maximum distance between them is
small, then the visible perimeter is $O(n^{3/4})$ with respect to any stacking order. This latter bound cannot be improved either.

Finally, we address the case where no more than $c$ disks can have a point in common.

These results partially answer some questions of Cabello, Haverkort, van Kreveld, and Speckmann.

Keywords: Visible perimeter, disk, unit disk, dense set.
\end{abstract}

\section{Introduction}

In cartography and data visualization, one often has to place similar copies of a symbol, typically an opaque disk, on a map or a figure at given locations \cite{De99}, \cite{Gr90}. The size of the symbol is sometimes proportional to the quantitative data associated with the location. On a cluttered map, it is difficult to identify the symbols. Therefore, it has been investigated in several studies how to minimize the amount of overlap \cite{GrC78}, \cite{SlM03}.

\begin{figure}
\centerline{\includegraphics{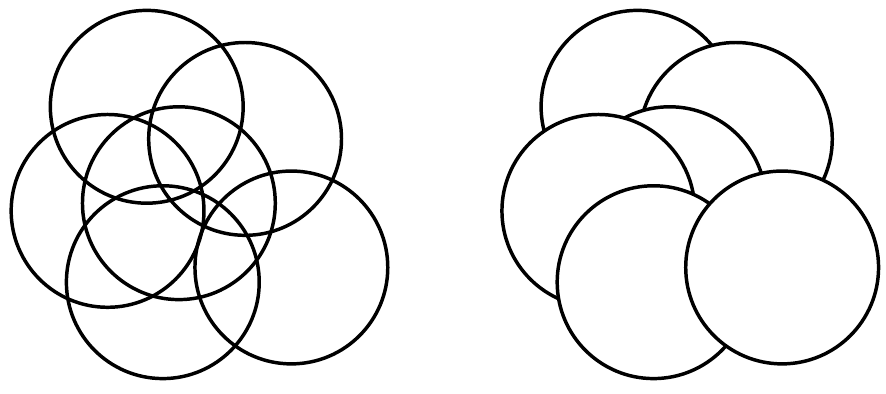}}
\caption{\label{fig_stacking_order}Left: A collection of unit disks in the plane. Right: A stacking order for them.}
\end{figure}

In the present note, we follow the approach of Cabello, Haverkort, van
Kreveld, and Speckmann~\cite{CaH10}. We assume that the symbols used are
opaque circular disks of the same size. Given a collection $\cal D$ of $n$
distinct {\em unit} disks in the $(x,y)$-plane, a {\em stacking order} is a one-to-one assignment $f\; :\;{\cal D} \rightarrow \{1, 2, \ldots, n\}$. We consider the integer $f(D)$ to be the $z$-coordinate of the disk $D\in\cal D$. The {\em map} corresponding to this stacking order is the 2-dimensional view of this arrangement from the point at negative infinity of the $z$-axis (for notational convenience, we look at the arrangement from below rather than from above.) In particular, for the lowest disk $D$, we have $f(D)=1$, and this disk, including its full perimeter, is visible from below. The total length of the
boundary pieces of the disks visible from below is the {\em visible perimeter}
of $\cal D$ with respect to the stacking order $f$, denoted by $\vis({\cal
  D},f)$. We are interested in finding a stacking order for which the visible
perimeter of $\cal D$ is as large as possible. See
Figure~\ref{fig_stacking_order}.

There are other situations in which this setting is relevant. Sometimes the vertices of a graph are not represented as points but as circles of a given radius. It may happen that
some vertices overlap in the visualization (especially if they have
further constraints on their geometric position), and then it becomes important to choose a
convenient stacking order that maximizes the visible perimeter.

Given an integer $n$, we define
\begin{equation}\label{eq_def_v}
v(n) = \inf_{|{\cal D}|=n} \max_{f} \vis({\cal D},f),
\end{equation}
where the maximum is taken over all stacking orders $f$.
We would like to describe the asymptotic behavior of $v(n)$, as $n$ tends to infinity.

Cabello {\em et al.}\ have already noted that $v(n)=\Omega(n^{1/2})$; in other
words, every set $\cal D$ of $n$ disks of unit radii admits
a stacking order with respect to which its visible perimeter is
$\Omega(n^{1/2})$. Indeed, by a well-known result or Erd\H os and
Szekeres~\cite{ErSz35}, we can select a sequence of $\lceil n^{1/2}\rceil$
disks $D_i\in {\cal D}\; (1\le i\le \lceil n^{1/2}\rceil)$ such that their
centers form a monotone sequence. More precisely, letting $x_i$ and $y_i$
denote the coordinates of the center of $D_i$, we have $x_1\le x_2\le
x_3\le\ldots$ and either $y_1\le y_2\le y_3\le\ldots$ or $y_1\ge y_2\ge
y_3\ge\ldots$. Then, in any stacking order $f$ such that $f(D_i)=i$ for every
$i,\; 1\le i\le \lceil n^{1/2}\rceil$, a full quarter of the perimeter of each $D_i\; (1\le i\le \lceil n^{1/2}\rceil)$ is visible from below. Therefore, the visible perimeter of $\cal D$ with respect to $f$ satisfies $$\vis({\cal D},f)\ge \frac{\pi}{2}\lceil n^{1/2}\rceil.$$

At the problem session of {\em EuroCG'11} (Morschach, Switzerland), Cabello, Haverkort, van Kreveld, and Speckmann asked whether $v(n)=\Omega(n)$; in other words, does there exist a positive constant $c$ such that every set of $n$ unit disks in the plane admits a stacking order, with respect to which its visible perimeter is at least $cn$?  We answer this question in the negative; {\em cf.}\ Theorems~\ref{theorem2} and~\ref{theorem5} below.

Given a set of points $P$ in the plane, let ${\cal D}(P)$ denote the
collection of disks of radius $1$ centered at the elements of $P$. For any
positive real $\eps$, let $\eps P$ stand for a similar copy of $P$, scaled by
a factor of $\eps$. For a stacking order $f$ of $\mathcal D(P)$ we will study
the quantity $\vis({\cal D}(\eps P),f)$. (Note the slight abuse of notation:
We denote the stacking order of $\mathcal D(P)$ and the corresponding
stacking order of $\mathcal D(\eps P)$ by the same symbol $f$. The two orders are also identified in Lemmas~\ref{lemma1} and~\ref{lemma2.1} and in Theorems~\ref{theorem2}, \ref{theorem3}, and~\ref{theorem5}.) It is not hard to verify that, as $\eps$ gets smaller, the function $\vis({\cal D}(\eps P),f)$ decreases. To see this, it is enough to observe, as was also done by
Cabello {\em et al.}\ (unpublished), that as we contract the set of centers,
the part of the boundary of each unit disk visible from below shrinks. As we
will see in Lemma~\ref{lemma2.1}, the limit in the following lemma has a simple alternative geometric interpretation.

\begin{lem}\label{lemma1}  For every point set $P$ in the plane and for
  every stacking order $f$ of the collection of disks ${\cal D}(P)$, we have
$$\vis({\cal D}(P),f)\ge\lim_{\eps\rightarrow 0}\vis({\cal D}(\eps P),f).$$
\end{lem}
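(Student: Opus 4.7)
The plan is to establish the monotonicity property that the paper alludes to informally---namely, that for any fixed point set $P$ and stacking order $f$, the function $\eps \mapsto \vis(\mathcal{D}(\eps P), f)$ is non-decreasing on $\eps > 0$. Once this is established, the conclusion of the lemma is immediate: evaluating at $\eps = 1$ gives $\vis(\mathcal{D}(P), f) \ge \vis(\mathcal{D}(\eps P), f)$ for every $\eps \in (0,1]$, and the limit as $\eps \to 0^+$ exists (a monotone function bounded below by $0$ has a limit) and is at most $\vis(\mathcal{D}(P), f)$.

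To prove the monotonicity, I would reduce it to a per-disk, per-pair claim: for each ordered pair $(D_i, D_j)$ with $f(D_j) > f(D_i)$, the angular measure of the arc of $\partial D_i(\eps)$ that lies in the interior of $D_j(\eps)$---the part hidden by $D_j$---is non-increasing in $\eps$, where I write $D_k(\eps)$ for the unit disk centered at $\eps c_k$. Once this is known, the visible arc on $\partial D_i(\eps)$ is the complement in $[0,2\pi)$ of the union of these hidden arcs over all higher-ranked $j$, hence is non-decreasing in $\eps$; summing over all $i$ (arc length equals angular measure, since radii are $1$) yields the monotonicity of the total visible perimeter.

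The per-pair claim is a short trigonometric calculation. Parametrize $\partial D_i(\eps)$ as $\eps c_i + (\cos\theta, \sin\theta)$ and set $v = c_j - c_i$ with polar angle $\phi$. The condition that this boundary point lies strictly inside $D_j(\eps)$ simplifies, after expanding $|\eps c_i + (\cos\theta,\sin\theta) - \eps c_j|^2 < 1$ and cancelling, to $\cos(\theta - \phi) > \eps|v|/2$. When $\eps|v| < 2$ this defines an arc centered at direction $\phi$ of angular half-width $\arccos(\eps|v|/2)$; when $\eps|v| \ge 2$ it is empty. Since $\arccos$ is decreasing, the hidden arc shrinks as $\eps$ grows, establishing the claim. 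The argument presents no serious obstacle; the only subtlety is handling the cases $v = 0$ (same center, trivially no hiding) and the tangent case $\eps|v| = 2$, but these affect only a measure-zero set of angles and do not change any arc lengths.
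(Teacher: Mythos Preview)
Your approach is correct and is exactly the argument the paper sketches informally just before stating the lemma (the paper gives no formal proof of Lemma~\ref{lemma1}, only the remark that ``as we contract the set of centers, the part of the boundary of each unit disk visible from below shrinks''); you have simply carried out the per-pair trigonometric computation explicitly, and your observation that each hidden arc is centered at the fixed direction $\phi$ makes the arcs nested in $\eps$, so the passage to the union is legitimate. One small slip: under the paper's convention the disk with $f$-value $1$ is closest to the viewer, so $\partial D_i$ is occluded by disks $D_j$ with $f(D_j) < f(D_i)$, not $f(D_j) > f(D_i)$---but since your arc-length formula is symmetric in $i$ and $j$ this does not affect the argument.
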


As in \cite{AlKP89}, \cite{Va92}, and \cite{Va96}, we consider {\em $C$-dense}
$n$-element point sets $P$, {\em i.e}., point sets in which the ratio of the maximum distance between two points to the minimum distance satisfies
$$\frac{\max(|pq| : p,q\in P)}{\min(|pq| : p,q\in P, p\not=q)}\le C n^{1/2}.$$
(The above ratio is sometimes called the \emph{spread} of $P$ \cite{Er03}; thus, we consider point sets with spread at most $Cn^{1/2}$.)

\begin{thm}\label{theorem2} For any $C$-dense $n$-element point set $P$ in the plane and for any stacking order $f$, we have
$$\lim_{\eps\rightarrow 0}\vis({\cal D}(\eps P),f)\le C'n^{3/4},$$
where $C'$ is a constant depending only on $C$.
\end{thm}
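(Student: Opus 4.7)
The plan is to reformulate the limit combinatorially via Lemma~\ref{lemma2.1}, telescope the perimeters of the nested convex hulls, and conclude by the Cauchy--Schwarz inequality.

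First, I would invoke Lemma~\ref{lemma2.1}, which I expect to identify the limit as
\[
\lim_{\eps\to 0}\vis(\mathcal D(\eps P),f)=\sum_{i=1}^n\ell_i,
\]
where the points of $P$ are listed as $p_1,\ldots,p_n$ with $f(p_i)=i$, $S_i=\{p_j:j\ge i\}$, and $\ell_i$ is the external angle at $p_i$ in the convex polygon $\conv(S_i)$ (zero when $p_i$ is not a vertex of this hull). This is consistent with the direct computation that, as $\eps\to 0$, the visible arc on $\partial D_i$ collapses to the outer normal cone of $p_i$ in $\conv(S_i)$, whose angular width is precisely $\ell_i$.

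Next, I would introduce the perimeters $P_i:=\per(\conv(S_i))$ and the increments $\Delta_i:=P_i-P_{i+1}$. Since convex containment $\conv(S_{i+1})\subseteq\conv(S_i)$ is monotone in perimeter, $\Delta_i\ge 0$, and telescoping gives
\[
\sum_i\Delta_i=P_1=\per(\conv(P))\le\pi\cdot\mathrm{diam}(P).
\]
After rescaling so that the minimum interpoint distance is~$1$, the $C$-density hypothesis gives $\mathrm{diam}(P)=O(C\sqrt n)$, so $\sum_i\Delta_i=O(C\sqrt n)$.

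The crucial step is a pointwise inequality $\ell_i^2\le K\,\Delta_i$ for an absolute constant~$K$. Let $p_{a_0}$ and $p_{a_k}$ be the hull neighbors of $p_i$ in $\conv(S_i)$, and put $d_j=|p_ip_{a_j}|\ge 1$. When $p_i$ replaces only the single hull edge $[p_{a_0},p_{a_1}]$ of $\conv(S_{i+1})$ (case $k=1$), the law of cosines yields
\[
\Delta_i=d_0+d_1-|p_{a_0}p_{a_1}|=\frac{2d_0d_1(1-\cos\ell_i)}{d_0+d_1+|p_{a_0}p_{a_1}|}\ge\frac{1-\cos\ell_i}{2}\ge\frac{\ell_i^2}{\pi^2},
\]
using $d_0,d_1\ge 1$ and $1-\cos x\ge 2x^2/\pi^2$ on $[0,\pi]$. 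When $p_i$ instead cuts off a longer chain $p_{a_0},p_{a_1},\ldots,p_{a_k}$ with $k\ge 2$, $\Delta_i$ can be strictly smaller than the ``triangle excess'' $d_0+d_k-|p_{a_0}p_{a_k}|$; the bound is then recovered by amortization based on the identity
\[
\ell_i=\sum_{j=1}^{k-1}e_{i+1}(p_{a_j})-\bigl(e_i(p_{a_0})-e_{i+1}(p_{a_0})\bigr)-\bigl(e_i(p_{a_k})-e_{i+1}(p_{a_k})\bigr),
\]
obtained by applying $\sum_v e_j(v)=2\pi$ to the hulls $V_i$ and $V_{i+1}$: each chain vertex is removed from the hull at most once over the whole evolution, which controls the total charge against the perimeter budget.

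Finally, the Cauchy--Schwarz inequality finishes the proof. Writing $N\le n$ for the number of indices where $\ell_i>0$,
\[
\sum_i\ell_i\le\sqrt{N\cdot\sum_i\ell_i^2}\le\sqrt{n\cdot K\sum_i\Delta_i}\le\sqrt{n\cdot O(C\sqrt{n})}=C'n^{3/4}.
\]
The main obstacle will be making the multi-edge amortization precise: since $\Delta_i$ alone does not control $\ell_i^2$ when many chain vertices are removed at once, care is needed to distribute the ``missing'' perimeter gain among the removed hull vertices so that the global identity still telescopes to $O(\sqrt{n})$.
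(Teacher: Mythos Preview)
Your overall framework---reduce to $\sum\tau_i$ via Lemma~\ref{lemma2.1}, bound $\sum\tau_i^2$ by telescoping the perimeters of the successive convex hulls, then apply Cauchy--Schwarz---is exactly the paper's. (Minor point: in Lemma~\ref{lemma2.1} the relevant hull for $p_i$ is $\conv\{p_1,\ldots,p_i\}$, the points \emph{below} $D_i$ in the stacking order, not $\{p_j:j\ge i\}$; your $S_i$ is reversed, though this is only a relabeling.)

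The genuine gap is the one you yourself flag. Your law-of-cosines computation establishes $\ell_i^2\le K\Delta_i$ only when removing $p_i$ exposes a single new edge ($k=1$); for $k\ge2$ you defer to an unspecified amortization via an angular identity, and it is not clear how that identity controls $\sum\ell_i^2$ rather than $\sum\ell_i$. The paper sidesteps this case analysis entirely by using the minimum-distance hypothesis \emph{geometrically}, not merely through the diameter bound on $\per(\conv P)$. Scale so that all pairwise distances are at least~$1$. On the two edges of $\conv\{p_1,\ldots,p_i\}$ incident to $p_i$, mark the points $q,q'$ at distance exactly~$1$ from $p_i$. Since every $p_j$ with $j<i$ is at distance at least~$1$ from $p_i$, none of them lies in the triangle $p_iqq'$, so $\conv\{p_1,\ldots,p_{i-1}\}$ is contained in the convex region obtained from $\conv\{p_1,\ldots,p_i\}$ by slicing off this triangle. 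Comparing perimeters of these two convex sets gives
\[
\per(i)-\per(i-1)\ \ge\ |p_iq|+|p_iq'|-|qq'|\ =\ 2-2\cos(\tau_i/2)\ \ge\ \tau_i^2/5,
\]
a clean pointwise bound valid regardless of how many old hull vertices are absorbed when $p_i$ is added. Summing and applying Cauchy--Schwarz then finishes exactly as you describe.
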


The order of magnitude of the upper bound in Theorem~\ref{theorem2} cannot be
improved:

\begin{thm}\label{theorem3} For every positive integer $n$, there exists a $4$-dense $n$-element point set $P_n$ in the plane and a stacking order $f$ such that
$$\lim_{\eps\rightarrow 0}\vis({\cal D}(\eps P_n),f)\ge n^{3/4}.$$
\end{thm}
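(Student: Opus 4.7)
The plan is to construct $P_n$ as the vertex set of $\sqrt n$ concentric, vertex-aligned regular $\sqrt n$-gons (assuming $n$ is a perfect square; other values of $n$ require only minor modifications). For $k=1,\ldots,\sqrt n$, layer $k$ consists of the $\sqrt n$ points at radius $R_k = 2\sqrt n - k$ and angles $2\pi j/\sqrt n$ for $j=0,\ldots,\sqrt n - 1$. Consecutive layers at the same angle are at distance exactly $1$, and the intra-layer separation $2R_k\sin(\pi/\sqrt n)$ exceeds $1$ for all $n\ge 4$, so the minimum distance is $1$; the diameter is less than $4\sqrt n$, and hence $P_n$ is $4$-dense. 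The stacking order $f$ places the innermost layer at the top of the stack and proceeds outward, with a counterclockwise tie-break within each layer.

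By Lemma~\ref{lemma2.1}, $\lim_{\eps\to 0}\vis(\mathcal D(\eps P_n),f)$ equals $\sum_{p\in P_n}\mu(p)$, where $\mu(p)$ is the angular measure of the set of unit vectors $\hat\theta$ for which $\hat\theta\cdot p\ge\hat\theta\cdot q$ for every $q$ with $f(q)>f(p)$; equivalently, the exterior angle at $p$ in the convex hull of $\{p\}\cup\{q : f(q)>f(p)\}$.

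The heart of the argument is a lower bound $\mu(p)\ge\sqrt{2/R_k}$ for every vertex $p$ in a non-innermost layer $k<\sqrt n$. Placing $p$ at $(R_k,0)$, all higher-$f$ points belonging to layers $k+1,\ldots,\sqrt n$ lie in the disk of radius $R_{k+1}=R_k - 1$; hence every direction with $|\theta|\le\arccos(1-1/R_k)$ automatically satisfies the constraint from those points. A short trigonometric computation then shows that, among same-layer tie-break vertices, only the immediate CCW neighbor (at angle $2\pi/\sqrt n$) imposes a binding constraint, namely $\theta\le\pi/\sqrt n$ on the CCW side; farther CCW neighbors and inner-layer vertices beyond layer $k+1$ give strictly weaker restrictions. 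Since $\arccos(1-1/R_k)\ge\sqrt{2/R_k}\ge 1/n^{1/4}$ while $\pi/\sqrt n = o(1/n^{1/4})$, the surviving cone $[-\arccos(1-1/R_k),\,\pi/\sqrt n]$ has angular measure at least $\arccos(1-1/R_k)\ge\sqrt{2/R_k}$.

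Summing over the $\sqrt n(\sqrt n - 1)$ non-innermost vertices gives $\sum_p\mu(p)\ge\sqrt n\cdot\sqrt 2\sum_{k=1}^{\sqrt n - 1}(2\sqrt n - k)^{-1/2}$, and a Riemann-sum comparison with $\int_0^{\sqrt n}(2\sqrt n - x)^{-1/2}\,dx = 2(\sqrt 2 - 1)n^{1/4}$ yields the bound $(4 - 2\sqrt 2)n^{3/4}>n^{3/4}$. The main obstacle will be the careful trigonometric verification that the immediate CCW neighbor is the only binding same-layer constraint, and that no deeper-layer vertex provides a stronger restriction than the $(k+1)$-th layer; these are delicate but routine computations.
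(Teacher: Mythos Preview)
Your construction and analysis are essentially the same as the paper's: concentric layers of $\sqrt n$ equally spaced points at radii $\sqrt n,\ldots,2\sqrt n-1$, ordered inner-to-outer, and for each point $p$ the external angle is bounded below via the tangent from $p$ to the next-inner circle. The paper uses \emph{semicircles} (angles $0$ to $\pi$) rather than full circles; this makes the argument slightly cleaner because every point directly gets $\tau\ge n^{-1/4}$, so the sum is $\ge n\cdot n^{-1/4}=n^{3/4}$ with no Riemann-sum step.

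Two small issues in your write-up are worth flagging. First, your invocation of Lemma~\ref{lemma2.1} has the inequality reversed: the lemma defines $\tau_i$ as the external angle of $p_i$ in $\conv\{p_j:f(p_j)\le f(p_i)\}$, i.e.\ relative to points with \emph{smaller} $f$-value, not larger. This is harmless---just define $f$ so that the innermost layer gets the lowest values---but as written it is inconsistent with the lemma. Second, with full circles the claim ``$\mu(p)\ge\sqrt{2/R_k}$ for every non-innermost vertex'' is not quite true: once more than about $\sqrt n-O(n^{1/4})$ same-layer predecessors have been placed, they wrap around and constrain the cone on \emph{both} sides, so the last $O(n^{1/4})$ vertices on each layer only get $\mu(p)=O(n^{-1/2})$ (for the very last one, $\mu(p)=2\pi/\sqrt n$ exactly). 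This costs only $O(\sqrt n)$ in the total and does not affect the conclusion, but the ``only the immediate CCW neighbor is binding'' claim needs this caveat. The paper's semicircle variant sidesteps the wrap-around entirely.
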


In the general case, where $P$ is an arbitrary $n$-element point set in the plane, we have been unable to improve on the easy lower bound
$$\max_f \vis({\cal D}(P),f)=\Omega(n^{1/2}),$$
sketched above. However, under special assumptions on $P$, we can do better.

\begin{thm}\label{theorem4} Every $C$-dense $n$-element point set $P$ in the plane admits a stacking order $f$ with
$$ \vis({\cal D}(P),f)\ge C''n^{2/3},$$
where $C''>0$ depends only on $C$.
\end{thm}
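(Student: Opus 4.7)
The plan is to combine a strip decomposition of the bounding box of $P$ with the Erd\H{o}s--Szekeres theorem and a quadrant visibility argument.

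Write $d := d_{\min}(P)$; if $d \geq 2$ the disks are pairwise disjoint and $\vis(\mathcal D(P),f) = 2\pi n$ for any $f$, so assume $d < 2$, giving diameter $L \leq C\sqrt n\, d$ by $C$-density. Partition the bounding box of $P$ into vertical strips of width $3$ and keep every other strip; disks with centers in different kept strips then have $x$-distance $\geq 3$ and are pairwise disjoint. In each kept strip $S_i$ with $k_i$ points, apply the Erd\H{o}s--Szekeres theorem to extract a NE-monotone chain $T_i \subseteq S_i$ with $|T_i| \geq \sqrt{k_i}$. Define $f$ so that the chain disks receive the smallest $f$-values (``in front''), ordered first by strip index and within each $T_i$ with the NE-most chain point receiving the smallest $f$, while the non-chain disks receive the remaining $f$-values in any order.

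For a chain disk $D_p$, every $D_q$ with $f(q) < f(p)$ is one of: (a) a chain disk in another kept strip, hence disjoint from $D_p$; or (b) a chain disk later in the same $T_i$, i.e.\ strictly NE of $p$, in which case the elementary inequality $a^2 + b^2 \leq 2a\cos\phi + 2b\sin\phi$ forces $a = b = 0$ when both $a,b \geq 0$ and $\cos\phi, \sin\phi \leq 0$, so $D_q$ does not meet the SW-quarter of $\partial D_p$. Non-chain disks have larger $f$ than $D_p$ and lie behind it, so they too leave $\partial D_p$ uncovered. Therefore the entire SW-quarter of $\partial D_p$ is visible, contributing $\pi/2$ per chain disk, and $\vis(\mathcal D(P), f) \geq \tfrac{\pi}{2}\sum_i |T_i|$.

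A standard packing argument (disjoint disks of radius $d/2$ about the points) gives $k_i \leq K = O(L/d^2) = O(\sqrt n / d)$ uniformly over all width-$3$ strips. WLOG the kept strips contain at least $n/2$ of the points, so concavity of $\sqrt{\cdot}$ with the cap $K$ yields
$$\sum_i |T_i| \;\geq\; \sum_i \sqrt{k_i} \;\geq\; \frac{n/2}{\sqrt K} \;=\; \Omega\!\left(n^{3/4}\sqrt d\right),$$
which already gives $\vis(\mathcal D(P),f) = \Omega(n^{2/3})$ as soon as $d \geq c\, n^{-1/6}$ for a suitable $c = c(C)$. The main obstacle is the complementary regime $d < c\, n^{-1/6}$, where $L = O(n^{1/3})$ and a single strip may contain more than $n^{2/3}$ points, so that the per-strip Erd\H{o}s--Szekeres bound is too weak. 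To handle this case I would refine the decomposition: use strips of width $W = W(d,L) < 3$ adapted so that each strip contains at most $O(n^{2/3})$ points, together with a correspondingly larger skip between kept strips to preserve cross-strip disk-disjointness; the key technical step is balancing the resulting pigeonhole loss (only a fraction $W/3$ of the points land in kept strips) against the improved cap, and combining this with Lemma~\ref{lemma1} to absorb residual interference via the scale-invariance of the limit. Making these choices compatible and giving matching $\Omega(n^{2/3})$ bound in the small-$d$ regime is the genuinely delicate part of the argument.
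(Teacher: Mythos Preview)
Your argument for the regime $d \ge cn^{-1/6}$ is correct and genuinely different from the paper's: you use a strip decomposition together with Erd\H{o}s--Szekeres and a quadrant-visibility count, whereas the paper works entirely through Lemma~\ref{lemma2.1}, partitioning a disk into $\Theta(n)$ annular sectors about a well-chosen exterior point, selecting one point per occupied sector, and showing each contributes external angle $\Omega(n^{-1/3})$. Your approach is more elementary in this regime but, unlike the paper's, is not scale-invariant.

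That lack of scale-invariance is exactly why the small-$d$ case is a genuine gap, not merely a delicate detail. Your proposed fix --- narrower strips with a larger skip --- cannot work. To keep disks in distinct kept strips disjoint you must separate kept strips by more than $2$, so regardless of the strip width $W$ there are only $O(L)$ kept strips, and only a $\Theta(W)$ fraction of the points land in them. If $W\le d$ each kept strip holds $O(L/d)=O(\sqrt n)$ points, giving $\sum_i\sqrt{k_i}=\Omega(Wn^{3/4})$; if $d<W\le 3$ each kept strip holds $O(WL/d^2)=O(W\sqrt n/d)$ points, giving $\sum_i\sqrt{k_i}=\Omega(W^{1/2}n^{3/4}d^{1/2})$. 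In both cases the bound is monotone increasing in $W$, so $W=3$ is already optimal and you are stuck at $\Omega(n^{3/4}\sqrt d)$, which is $o(n^{2/3})$ once $d=o(n^{-1/6})$. Invoking Lemma~\ref{lemma1} does not help either: that lemma lets you replace $\vis(\mathcal D(P),f)$ by the scale-invariant quantity $\sum\tau_i$, but your quadrant argument bounds the actual visible perimeter, not $\sum\tau_i$, and you cannot scale $P$ \emph{up} and retain a lower bound. To close the gap you would need an argument that controls the external angles $\tau_i$ directly (as the paper does via its annular-sector construction), at which point the strip decomposition becomes irrelevant.
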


In particular, Theorem~\ref{theorem4} provides an $\Omega(n^{2/3})$ lower bound for the
visible perimeter of a collection of $n$ unit disks centered at the points of
an $n^{1/2}\times n^{1/2}$ uniform grid, under
a suitable stacking order. If the side length of the grid is very small, this is
better than the line-by-line ``lexicographic'' stacking order, for which the
visible perimeter is only $\Theta(n^{1/2}\log n)$. It turns out that
in this case there is no stacking order for which the order of the magnitude
of the visible perimeter would exceed $n^{2/3}$.

\begin{thm}\label{theorem5}
Let $n$ be a perfect square and let $G_n$ denote an $n^{1/2}$ by $n^{1/2}$ uniform grid in the plane. For any stacking order $f$, we have
$$\lim_{\eps\rightarrow 0}\vis({\cal D}(\eps G_n),f)=O(n^{2/3}).$$
\end{thm}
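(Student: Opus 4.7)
The plan is first to invoke Lemma~\ref{lemma2.1}, which identifies the left-hand side with $\sum_{p\in G_n}\alpha(p)$, where $\alpha(p)$ is the exterior angle of $p$ in the convex hull $\conv\{q\in G_n:f(q)\le f(p)\}$ (and is $0$ when $p$ is not a vertex of this hull). Interchanging summation and integration, this sum equals $\int_{S^1}N_u\,du$, where $N_u$ counts the ``records'' in direction $u$---the points $p$ that are the strict $u$-maximum of $\{q:f(q)\le f(p)\}$. Hence the task reduces to showing $\int_{S^1}N_u\,du=O(n^{2/3})$.

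The main grid-specific input will be Andrews' theorem: any convex polygon whose vertices lie in an $N\times N$ sub-lattice of $\mathbb{Z}^2$ has at most $O(N^{2/3})$ vertices. Applied to $G_n$ (so $N=n^{1/2}$), this gives $|V(C_i)|\le c\,n^{1/3}$ for every intermediate hull $C_i=\conv\{q\in G_n:f(q)\le i\}$, which is sharper than the $O(n^{1/2})$ bound available for a generic $C$-dense set and that governs the proof of Theorem~\ref{theorem2}. I expect the improvement from $n^{3/4}$ to $n^{2/3}$ to come from feeding Andrews' bound into the analogous argument.

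A natural concrete attack is to split $\sum_p\alpha(p)$ at a threshold $\tau$ (tentatively $\tau=n^{-1/3}$). Contributions from $p$ with $\alpha(p)\le\tau$ sum to at most $n\tau$. For the ``large-angle'' part, at each step the total exterior angle $2\pi$ is spread over at most $c\,n^{1/3}$ vertices, so at most $O(1/\tau)=O(n^{1/3})$ of them can carry exterior angle $\ge\tau$; an amortized charging argument over the $n$ insertions would translate this per-step bound into a global bound on large-angle events. The cleanest version would charge each large-angle event to the two new lattice edges it creates in the hull, and use that each $C_i$ has perimeter at most $4n^{1/2}$ together with the limited supply of short primitive lattice directions (the very ingredient in the proof of Andrews' theorem) to bound the total charge by $O(n^{2/3})$.

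The main obstacle is making the charging precise. A priori, over $n$ insertions there could be $\Theta(n)$ new-vertex events, each contributing exterior angle up to $\pi$, so Andrews' bound on individual hulls is not by itself enough---one must use the lattice arithmetic of $\mathbb{Z}^2$ to argue that, amortized, these events draw on ``fresh'' primitive edge directions that are not reused too often. This arithmetic content, going beyond mere convex position, is what I expect to turn the $O(n^{1/3})$ hull bound into the $O(n^{2/3})$ bound on the total visible perimeter.
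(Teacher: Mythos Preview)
Your setup is correct and matches the paper: invoke Lemma~\ref{lemma2.1}, and split off the small angles at threshold $n^{-1/3}$ (the paper calls this set $I_0$ and bounds its contribution by $n^{2/3}$ exactly as you do). The integral reformulation $\int_{S^1}N_u\,du$ is also equivalent to the Cauchy-formula viewpoint the paper adopts. But from there your proposal is a plan, not a proof, and the plan as stated does not go through.

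The specific gap is that Andrews'/Jarn\'{\i}k's bound $|V(C_i)|=O(n^{1/3})$ is the wrong lever. You yourself note that this per-hull bound does not amortize: nothing prevents $\Theta(n)$ insertion events each with moderate exterior angle. Your fallback---``charge to the two new lattice edges and use the limited supply of short primitive directions''---is the right instinct but is not a mechanism; edges are created and destroyed repeatedly, and you have not said what quantity is monotone and bounded that absorbs these charges. The paper does \emph{not} use Jarn\'{\i}k's theorem for this upper bound at all.

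What the paper actually does is refine the perimeter-growth estimate from Theorem~\ref{theorem2}. For each $i$ it introduces $c_i$, the distance from $p_i$ out along the two hull edges until the chord first meets $C_{i-1}$; this chord always contains some earlier grid point $p_j$, so the grid vector $v(i)=p_i-p_j$ has $|v(i)|\le c_i$. The improved estimate is $\per(i)-\per(i-1)\ge c_i\tau_i^2/5$. Indices with $c_i\tau_i>n^{-1/6}$ (the set $I_1$) are then handled directly by the total perimeter $O(n^{1/2})$, giving another $O(n^{2/3})$. For the remaining indices $I_2$ (so $\tau_i\ge n^{-1/3}$ and $c_i\tau_i\le n^{-1/6}$, hence $|v(i)|\le n^{1/6}$), the paper does a direction-localized version of Cauchy's formula: it charges $i$ to the pair $(v(i),\ell)$ where $\ell$ is a dyadic scale, shows the width integral over the arc $V_{v,\ell}$ of length $2^{-\ell}$ centred at the direction of $v$ grows by at least $2^{-\ell}|v|\tau_i/30$ at each such step, and bounds the total growth by $2^{-\ell}(2n)^{1/2}$. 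This gives $\sum_{I_2(v,\ell)}\tau_i=O(n^{1/2}/|v|)$, and summing over short grid vectors $|v|\le n^{1/6}$ and the $O(\log(n^{1/6}/|v|))$ relevant scales yields $O(n^{2/3})$.

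So the missing ideas in your proposal are: (i) the parameter $c_i$ and the second split at $c_i\tau_i=n^{-1/6}$; (ii) replacing the global perimeter by the \emph{width in a specific arc of directions} determined by the short grid vector $v(i)$---this is the monotone bounded quantity that makes the charging work and is precisely where the lattice structure enters.
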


Consequently, we have $v(n)=O(n^{2/3})$.

Lemma~\ref{lemma1} implies that the worst collections of disks are those whose centers are very close to each other, so all disks have a point in common. This is, of course, not a
realistic assumption in the labeling problem in cartography that has motivated
our investigations. In practical applications, only a bounded number of unit
disks share a point. For such a case, we have the following result:

\begin{thm}\label{thm_bounded_overlap}
Let $\cal D$ be a collection of $n$ unit disks in which at most $c$ disks have a point in common. Then there exists a stacking order $f$ for which
$$\vis({\cal D}, f) = \Omega(v(c) n / c),$$
where $v(c)$ is given in (\ref{eq_def_v}). This bound is worst-case asymptotically tight.
\end{thm}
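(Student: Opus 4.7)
The theorem splits into a lower bound $\vis\ge\Omega(v(c)n/c)$ and a matching upper bound giving asymptotic tightness; I handle them separately.

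\textbf{Tightness.} For the upper bound I take a $c$-disk configuration $\mathcal{D}^*$ whose maximum visible perimeter equals $v(c)$ (up to a $(1+o(1))$ factor, if the infimum is not attained). I form $\mathcal{D}$ by placing $\lfloor n/c\rfloor$ translated copies of $\mathcal{D}^*$ in pairwise disjoint regions of the plane, padded with a few isolated unit disks to reach $|\mathcal{D}|=n$. The depth remains $\le c$---it is inherited within each copy and zero across copies. Because the copies are spatially disjoint, every stacking has visible perimeter equal to the sum of the per-copy perimeters; each copy contributes at most $v(c)$, and the padding contributes $O(c)$, so $\vis(\mathcal{D},f)=O(v(c)n/c)$.

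\textbf{Lower bound via partition and coloring.} Let $\mathcal{D}$ be a depth-$c$ family of $n$ unit disks. I partition the plane into axis-aligned squares of side $1/\sqrt{2}$; each cell fits inside a unit disk, so the depth condition forces at most $c$ centers per cell. I color the cells with a $4\times 4$ periodic pattern (16 colors); two same-colored cells lie in squares at gap $\ge 3/\sqrt{2}>2$, making the unit disks they carry pairwise disjoint. For each cell $C$ with $k_C$ centers, the definition of $v$ provides an internal stacking of its disks with visible perimeter $\ge v(k_C)$; within a single color these contributions add (disjoint disks). Pigeonholing the colors produces a color class $\gamma^*$ with $\sum_{C\in\gamma^*}v(k_C)\ge\tfrac{1}{16}\sum_C v(k_C)$. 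I build the global stacking by assigning the disks of $\gamma^*$ the smallest $f$-values (closest to the viewer, hence unobstructed), internally stacked per the $v$-achieving orders, and assign the remaining disks larger $f$-values in any order; this yields $\vis(\mathcal{D},f)\ge\tfrac{1}{16}\sum_C v(k_C)$.

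\textbf{Summation via sub-additivity and monotonicity---the main obstacle.} It remains to convert $\sum_C v(k_C)$ into $\Omega(v(c)n/c)$, which is the main technical step. I will use two elementary properties of $v$. \emph{Sub-additivity}, $v(a+b)\le v(a)+v(b)$, follows by placing optimal $a$- and $b$-disk configurations far apart; iteration gives $v(mk)\le m\,v(k)$ for every positive integer $m$. \emph{Monotonicity}, $v(n+1)\ge v(n)$, holds because removing one disk from any $(n+1)$-configuration gives an $n$-subconfiguration with maximum visible perimeter $\ge v(n)$, and restoring the missing disk at the largest $f$-value does not decrease the other disks' visible perimeters. For any $k\le c$, take $m=\lceil c/k\rceil\le 2c/k$; then $v(c)\le v(mk)\le m\,v(k)\le(2c/k)\,v(k)$, so $v(k_C)/k_C\ge v(c)/(2c)$ for every cell. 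Summing yields $\sum_C v(k_C)\ge(v(c)/(2c))\sum_C k_C=v(c)n/(2c)$, and combined with the pigeonhole factor this gives $\vis(\mathcal{D},f)\ge v(c)n/(32c)=\Omega(v(c)n/c)$, as claimed.
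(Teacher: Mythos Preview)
Your proof is correct and follows the same overall strategy as the paper's: partition the plane into cells carrying $O(c)$ disks each, stack each cell optimally, and invoke the sub-additivity and monotonicity of $v$ to turn $\sum_C v(k_C)$ into $\Omega(v(c)n/c)$. The partitions differ: the paper randomly shifts a side-$4$ grid so that at least $n/4$ disks fall \emph{entirely} within single cells (this makes disks in distinct cells automatically disjoint and bounds $k_i\le(16/\pi)c$ by an area argument), whereas you use a fixed side-$1/\sqrt{2}$ grid (so the depth hypothesis directly gives $k_C\le c$) together with a $4\times4$ coloring to recover disjointness between same-colored cells and then pigeonhole over the $16$ colors. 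Your route is fully deterministic and spells out the monotonicity of $v$ explicitly; the paper's randomized shift avoids the coloring layer but relies on that same monotonicity implicitly in the step $v(c)\le v(r_ik_i)$. The tightness construction is identical in both.
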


In Section 2, we establish Theorems~\ref{theorem2} and~\ref{theorem3}. The proof of Theorem~\ref{theorem4} is
presented in Section 3. In Section~4, we consider the square grid and present a
much simpler proof of this special case of Theorem~\ref{theorem4}
based on Jarnik's theorem~\cite{Ja25}; we then prove Theorem~\ref{theorem5}, which states that the
bound of Theorem~\ref{theorem4} is tight in this case. In Section~5, we prove Theorem~\ref{thm_bounded_overlap}. The last section contains concluding
remarks and open problems.

\section{Dense Sets with Largest Visible Perimeter}

In this section, we prove Theorems~\ref{theorem2} and~\ref{theorem3}.

First, we express the limit of visible perimeters in a simpler form. Given a set of points $P$ in the plane, let $\conv P$
stand for its convex hull. Let $D(p)$ denote the unit disk centered at
$p$ and let ${\cal D}(P)$ stand for the set $\{D(p) : p\in P\}$.

Fix an orthogonal system of coordinates in the plane. For any point $p=(x,y)$ and for any $\eps>0$, let $\eps p$ denote the point with coordinates $(\eps x,\eps y)$.

\begin{lem}\label{lemma2.1} Let $P=\{p_1, p_2,\ldots, p_n\}$ be a set of
  points in the plane, let $\eps>0$, and let $f$ be the stacking order of ${\cal D}(\eps P)$ given by $f(D(\eps p_i))=i$ for $i=1, 2, \ldots, n$.

We have
$$\lim_{\eps\rightarrow 0}\vis({\cal D}(\eps P),f)=\sum_{i=1}^n \tau_i,$$
where $\tau_1=2\pi$, and for all other indices, $\tau_i=0$ if $p_i$ belongs to $\conv\{p_1,\allowbreak p_2,\allowbreak \ldots,\allowbreak p_{i-1}\}$, and $\tau_i$ is equal to the external angle of the convex polygon $\conv\{p_1, p_2,\ldots, p_i\}$ at vertex $p_i$, otherwise.
\end{lem}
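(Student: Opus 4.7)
The plan is to decompose $\vis(\mathcal{D}(\eps P), f) = \sum_{i=1}^n \ell_i(\eps)$, where $\ell_i(\eps)$ is the arc length of the portion of $\partial D(\eps p_i)$ not covered by any higher disk $D(\eps p_j)$, $j<i$, and then to establish $\ell_i(\eps)\to\tau_i$ as $\eps\to 0^+$ for each $i$ separately. For $i=1$ the bottom disk is entirely visible, so $\ell_1(\eps)=2\pi=\tau_1$. For $i\geq 2$, I would parametrize $\partial D(\eps p_i)$ by unit vectors $u\in S^1$ via $x=\eps p_i+u$ and expand the condition $|x-\eps p_j|\leq 1$ into $\langle u,\, p_j-p_i\rangle\geq \tfrac{\eps}{2}|p_i-p_j|^2$. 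Thus the set $H_j^\eps\subset S^1$ of directions hidden by the $j$-th disk is a closed circular cap whose threshold shrinks to $0$ as $\eps\to 0^+$.

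Consequently $H_j^\eps$ increases monotonically to the open semicircle $\{u:\langle u,p_j-p_i\rangle>0\}$, and the visible set $V_i^\eps = S^1\setminus\bigcup_{j<i}H_j^\eps$ decreases monotonically to $V_i:=\bigcap_{j<i}\{u\in S^1:\langle u,p_j-p_i\rangle\leq 0\}$. Since the sum is finite and the convergence on each circle is monotone, $\ell_i(\eps)\to|V_i|$, and it remains to identify $|V_i|$ with $\tau_i$. Note that $V_i$ is exactly $S^1$ intersected with the polar cone $K_i^\circ$ of $K_i:=\mathrm{cone}\{p_j-p_i:j<i\}$.

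Two cases arise. If $p_i\in\conv\{p_1,\ldots,p_{i-1}\}$, then some nonnegative combination of the $p_j-p_i$ with positive coefficients summing to $1$ equals $0$; taking the inner product with any $u\in K_i^\circ$ forces $u$ to be orthogonal to each $p_j-p_i$ contributing to this combination, so $K_i^\circ$ is contained in a line and $|V_i|=0=\tau_i$. Otherwise $p_i$ is a new vertex of $Q:=\conv\{p_1,\ldots,p_i\}$, and $K_i$ coincides with the tangent cone of $Q$ at $p_i$ (a short check, since every $q\in Q$ is a convex combination of the $p_j$, giving $q-p_i\in K_i$, while small positive scalings of elements of $K_i$ stay in $Q-p_i$). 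The angular opening of $K_i$ is then the interior angle of $Q$ at $p_i$, and the polar $K_i^\circ$ has opening $\pi$ minus this quantity, which is precisely the external angle $\tau_i$.

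The main conceptual step is the passage to the limit, but this is handled cleanly by the monotonicity observed already in the paper; the degenerate situations (collinear points, boundary directions where $\langle u,p_j-p_i\rangle=0$) affect only a set of arc-length measure zero and so contribute nothing.
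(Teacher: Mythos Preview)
Your argument is correct and follows essentially the same per-disk limit computation as the paper: both show that the visible arc of $D(\eps p_i)$ converges to the set of outward directions in the normal cone of $\conv\{p_1,\ldots,p_i\}$ at $p_i$, whose angular measure is exactly $\tau_i$. Your treatment via the explicit inequality $\langle u,\,p_j-p_i\rangle\ge\tfrac{\eps}{2}|p_i-p_j|^2$, monotone convergence, and polar cones is somewhat more formal than the paper's geometric description in terms of external unit normals, but the underlying idea is the same.
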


\begin{figure}
\centerline{\includegraphics{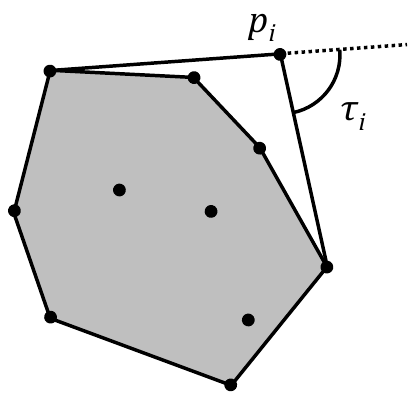}}
\caption{\label{fig_xang}If $p_i$ lies outside the convex hull of the preceding points, then $\tau_i$ is defined as the external angle of the polygon $\conv\{p_1, \ldots, p_i\}$ at vertex $p_i$.}
\end{figure}

See Figure~\ref{fig_xang}.

\begin{proof}[Proof of Lemma~\ref{lemma2.1}] We prove that the contribution of
$\mathcal D(\eps p_i)$ to the visible perimeter tends to $\tau_i$ as
$\eps\to0$ for each $1\le i\le n$.

Since $D(\eps p_1)$ is the lowest disk in ${\cal D}(\eps P)$, its whole
boundary is visible from below. Therefore, its contribution is $2\pi$. Let
$i>1$.
If $p_i$ belongs to the interior of $\conv\{p_1,p_2,\ldots,p_{i-1}\}$, then there is a threshold $\eps_0>0$ such that
$$D(\eps p_i)\subset\bigcup_{j=1}^{i-1}D(\eps p_j),$$
for every $\eps<\eps_0$. In this case, no portion of the boundary of $D(\eps
p_i)$ is visible from below, provided that $\eps$ is sufficiently small. If
$p_i$ lies on the boundary of $\conv\{p_1, p_2,\ldots, p_i\}$, then it is
in between some points $p_j$ and $p_k$ with $1\le j<k<i$ and although $\mathcal
D(\eps p_i)$ will not be entirely covered by earlier disks for any $\eps>0$,
the part of its boundary outside $\mathcal D(\eps p_j)\cup\mathcal D(\eps
p_k)$ tends to zero as $\eps\to 0$.

Finally, if $p_i$ lies outside $\conv\{p_1,\ldots,p_{i-1}\}$, then it is a
vertex of $\conv\{p_1,\allowbreak \ldots,\allowbreak p_i\}$. Consider the external unit normal vectors
to the two sides of $\conv\{p_1,\ldots p_i\}$ that meet at $\eps p_i$ (or in
case the convex hull is a single segment, the two unit normal vectors
for this segment). Drawing these vectors from $\eps p_i$, the arc on the
boundary of $\mathcal D(p_i)$ between them is of length $\tau_i$ and it is not covered by $\bigcup_{j=1}^{i-1}D(\eps p_j)$. Thus, it is visible from below, and, as $\eps\rightarrow 0$, the total contribution of the remaining part of the boundary of $D(\eps p_i)$ to the visible perimeter tends to $0$, concluding the proof.
\end{proof}

{
\renewcommand{\thethm}{2}
\begin{thm}For any $C$-dense $n$-element point set $P$ in the plane and for any stacking order $f$, we have
$$\lim_{\eps\rightarrow 0}\vis({\cal D}(\eps P),f)\le C'n^{3/4},$$
where $C'$ is a constant depending only on $C$.
\end{thm}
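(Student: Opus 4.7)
The plan is to apply Lemma~\ref{lemma2.1}, which expresses $\lim_{\eps\to 0}\vis(\mathcal{D}(\eps P),f)$ as the discrete sum $\sum_{i=1}^n\tau_i$. After a harmless rescaling I may assume the minimum pairwise distance in $P$ equals $1$, so that $P$ lies inside a disk of radius $O(Cn^{1/2})$; in particular $\per(\conv P)=O(n^{1/2})$ and $\mathrm{area}(\conv P)=O(n)$. The remaining task is to prove $\sum_{i=1}^n\tau_i\le C'n^{3/4}$.

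My approach is to establish two moment inequalities for the external angles and combine them. For each $i$ with $p_i\notin\conv\{p_1,\ldots,p_{i-1}\}$, let $d_i$ denote the distance from $p_i$ to this previous hull. I would work with the support functions $h_i(\vec u)=\max_{j\le i}\langle p_j,\vec u\rangle$ and invoke Cauchy's perimeter formula $\int_{S^1}h_n\,d\vec u=\per(\conv P)$. The increment $h_i-h_{i-1}$ is supported precisely on the arc of angular length $\tau_i$ on which $p_i$ becomes the new support point, and a direct computation shows its average value on that arc is $\Theta(d_i)$. Telescoping and integrating then yields
$$\sum_i d_i\tau_i\;=\;O(n^{1/2}).$$
A parallel argument based on the area of $\conv P$ (or equivalently a quadratic integral of the support function) produces a second inequality of the form $\sum_i r_i d_i\tau_i=O(n)$, where $r_i$ is the transverse width of the previous hull as seen from $p_i$.

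Finally, combining these two constraints via a Cauchy--Schwarz inequality, after a dyadic decomposition of the indices according to the magnitude of $d_i$, yields the desired bound $\sum_i\tau_i=O(n^{3/4})$.

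The main obstacle is the small-$d_i$ regime. When $p_i$ lies just outside the previous hull, $d_i$ is close to zero, so the perimeter-based inequality $\sum d_i\tau_i=O(n^{1/2})$ controls $\tau_i$ only weakly. There I would use the geometric relation $\tau_i\asymp\sqrt{d_i/r_i}$, coming from analyzing the tangent lines from $p_i$ to the previous hull, together with the area bound, to estimate the contribution of these terms. Balancing the contributions of the two regimes by the optimal choice of threshold is precisely what produces the exponent $3/4$.
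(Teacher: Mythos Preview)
Your outline has a genuine gap: it never uses the minimum-distance hypothesis at the crucial point, and the tools you propose in its place do not work.

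After rescaling you use density only to get $\per(\conv P)=O(n^{1/2})$ and $\mathrm{area}(\conv P)=O(n)$. But these global bounds alone cannot give $\sum_i\tau_i=O(n^{3/4})$. Take $n$ collinear points $p_i=(i\eps,0)$ with $\eps=n^{-1/2}$: then $\per(\conv P)=O(n^{1/2})$, $\mathrm{area}=0$, your perimeter inequality $\sum d_i\tau_i=O(n^{1/2})$ holds (each $d_i=\eps$, each $\tau_i=\pi$), the area inequality is vacuous, yet $\sum_i\tau_i=\Theta(n)$. Your escape hatch is the claimed relation $\tau_i\asymp\sqrt{d_i/r_i}$, but this is simply false for polygonal hulls. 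If $p_i$ lies at height $d_i$ above the midpoint of a long edge of $C_{i-1}$ of length $L$, one computes $\tau_i\approx 4d_i/L$, not $\sqrt{d_i/L}$; and in the collinear example $r_i=0$ while $\tau_i=\pi$. The $\sqrt{d/r}$ heuristic comes from a smooth boundary of curvature $1/r$, which $C_{i-1}$ does not have. Consequently the dyadic/Cauchy--Schwarz step you sketch cannot close, because the constraints you have written down are compatible with $\sum\tau_i$ of order $n$.

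What is missing is a direct use of the separation. The paper does not look at $d_i$ at all: since every $p_j$ with $j<i$ lies at distance at least $1$ from $p_i$, the isoceles triangle $p_iqq'$ with $|p_iq|=|p_iq'|=1$ along the two edges of $C_i$ incident to $p_i$ is disjoint from $C_{i-1}$, and removing it from $C_i$ gives
\[
\per(i)-\per(i-1)\;\ge\;2-2\cos(\tau_i/2)\;\ge\;\tau_i^2/5.
\]
Summing, $\sum_i\tau_i^2\le 5\,\per(n)=O(n^{1/2})$, and a single application of Cauchy--Schwarz gives $\sum_i\tau_i\le n^{1/2}\bigl(\sum\tau_i^2\bigr)^{1/2}=O(n^{3/4})$. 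No area bound, no $d_i$, no dyadic decomposition is needed. If you want to salvage your framework, note that the same triangle argument shows $d_i\ge\sin(\tau_i/2)$ whenever the minimum distance is $1$; plugging this into your perimeter identity $\per(i)-\per(i-1)\asymp d_i\tau_i$ recovers the quadratic lower bound in one line, without any appeal to area or to the incorrect curvature relation.
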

\addtocounter{thm}{-1}
}

\begin{proof}
Consider a $C$-dense point set $P$ in the
plane and let $f$ be a stacking order for $\mathcal D(P)$. Using Lemma~\ref{lemma2.1}, it
is enough to prove $\sum_{i=1}^n\tau_i\le C'n^{3/4}$ for the angles $\tau_i$ defined
in the lemma. As $\tau_i=0$ whenever $p_i$ is contained in
$\conv\{p_1,\ldots,p_{i-1}\}$, we can assume this is never the case.

Since the quantity $\sum \tau_i$ is independent of scale, we can assume without
loss of generality that the minimum distance between points is
$1$; thus, the maximum distance (diameter) is at most $C n^{1/2}$. We write
$P=\{p_1, p_2,\ldots, p_n\}$ with $f(D(p_i))=i$.

\begin{figure}
\centerline{\includegraphics{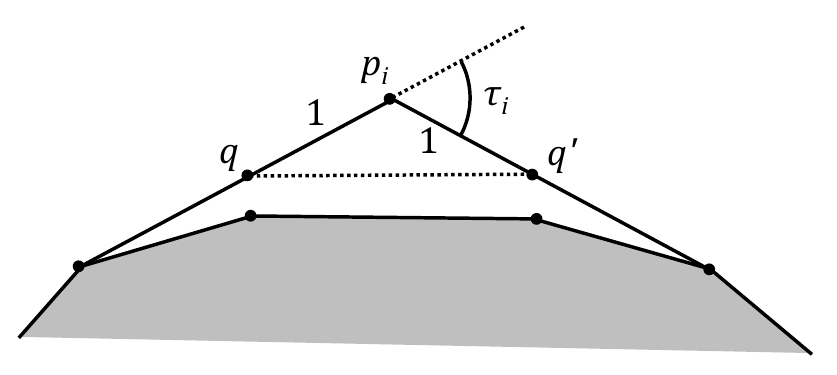}}
\caption{\label{fig_perimeter}The triangle $p_iqq'$ lies entirely outside the convex hull of $p_1, \ldots, p_{i-1}$.}
\end{figure}

For every $i\; (1\le i\le n)$, let $\per(i)$ denote the perimeter of
$\conv\{p_1, p_2,\ldots, p_i\}$. We define the perimeter of a segment to be
twice its length and the perimeter of a point to be $0$. Let
$2\le i\le n$, consider the two sides of the polygon $\conv\{p_1,
p_2,\ldots, p_i\}$ meeting at $p_i$, and denote by $q$ and $q'$ the points on these sides
at unit distance from $p_i$. Since no point of $P$ is closer to $p_i$ than
$1$, the triangle $p_iqq'$ does not contain any element of $\{p_1, p_2,\ldots,
p_{i-1}\}$. (See Figure~\ref{fig_perimeter}.) Hence, $\conv\{p_1, p_2,\ldots, p_{i-1}\}$ is contained in the
convex region obtained from $\conv\{p_1,\allowbreak p_2,\allowbreak \ldots,\allowbreak p_i\}$ by cutting off the
triangle $p_iqq'$. (In the degenerate case when $\conv\{p_1,\allowbreak \ldots,\allowbreak p_i\}$ is a
segment, we have $q=q'$,
and the empty ``triangle'' becomes just a unit segment.) This observation implies that the perimeter of $\conv\{p_1, p_2,\ldots, p_{i-1}\}$ satisfies
$$\per(i-1)\le \per(i)-|p_iq|-|p_iq'|+|qq'|= \per(i)-2+2\cos\frac{\tau_i}{2}\le \per(i)-\frac{\tau_i^2}{5}.$$
Here we used that the external angle of the triangle $p_iqq'$ at vertex $p_i$ is $\tau_i$.

Thus, we have $$\per(i)-\per(i-1)\ge \frac{\tau_i^2}{5},$$
for all $i>1$. Adding up these inequalities, we obtain
$$\per(n)\ge \sum_{i=2}^n\frac{\tau_i^2}{5}.$$
Since $\per(n)$ is at most $\pi$ times the diameter of $P$, that is, $\per(n)\le\pi C n^{1/2}$, we have
$$\sum_{i=2}^n \tau_i^2\le 5\pi C n^{1/2}.$$
Applying the relationship between the arithmetic and quadratic means, we can conclude that
$$\sum_{i=2}^n \tau_i\le (n-1)^{1/2}\left(\sum_{i=2}^n\tau_i^2\right)^{1/2}<(5\pi C)^{1/2}n^{3/4}.$$

Taking into account that $\tau_1=2\pi$, the theorem follows by Lemma~\ref{lemma2.1}.
\end{proof}

{
\renewcommand{\thethm}{3}
\begin{thm}For every positive integer $n$, there exists a $4$-dense $n$-element point set $P_n$ in the plane and a stacking order $f$ such that
$$\lim_{\eps\rightarrow 0}\vis({\cal D}(\eps P_n),f)\ge n^{3/4}.$$
\end{thm}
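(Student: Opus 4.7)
By Lemma~\ref{lemma2.1}, it suffices to construct a $4$-dense $n$-point set $P_n$ and a stacking order $f$ such that $\sum_{i=1}^n \tau_i \ge n^{3/4}$, where the $\tau_i$ are the external-angle contributions defined in that lemma. The plan is to arrange $P_n$ so that, when the points are added in the chosen order, each new point becomes a temporary convex hull vertex with external angle of order $n^{-1/4}$; summing over all $n$ additions then yields $\sum \tau_i = \Omega(n \cdot n^{-1/4}) = \Omega(n^{3/4})$. This target is dictated by the tight case of the upper bound in Theorem~\ref{theorem2}: the inequality $\per(i)-\per(i-1)\ge \tau_i^2/5$ together with $\per(P_n)=O(\sqrt n)$ and Cauchy--Schwarz force $\sum \tau_i \le O(n^{3/4})$, with equality approached only when all $\tau_i$ are of order $n^{-1/4}$ and the perimeter grows by $\Theta(n^{-1/2})$ per addition.

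The construction I would try arranges $P_n$ as the vertices of $m=\lceil\sqrt n\,\rceil$ nested regular $m$-gons $C_1\subsetneq C_2\subsetneq\cdots\subsetneq C_m$ inscribed in concentric circles of radii $r_1<r_2<\cdots<r_m$, chosen so that (a) $r_j-r_{j-1}\ge 1$, (b) $r_m=O(\sqrt n)$, and (c) $r_1$ is large enough that consecutive vertices within each layer are at distance $\ge 1$ (giving minimum pairwise distance $1$ and hence spread $O(\sqrt n)$, i.e.\ $4$-density). The stacking order is: add the layers from innermost to outermost, and within each layer add its vertices in angular order. Each new vertex is automatically outside the current convex hull, because $C_j$ strictly contains $\conv(C_1\cup\cdots\cup C_{j-1})$.

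The key step is to estimate $\tau_i$ at a vertex $v$ of layer $j\ge 2$. At the moment $v$ is added, one of its hull neighbors is the previously-added vertex in the same layer (reached by a chord), while the other is the tangent point from $v$ onto $\conv(C_1\cup\cdots\cup C_{j-1})\approx C_{j-1}$. The tangent angle from $v$ to the circle of radius $r_{j-1}$ is $\arccos(r_{j-1}/r_j)\approx\sqrt{2/r_j}$, and the chord direction to the same-layer neighbor is controlled by the within-layer angular spacing $2\pi/m$. After a short computation one finds that the external angle at $v$ is (roughly) the geometric mean of these two scales; with the parameters tuned so that this geometric mean is $\Theta(n^{-1/4})$, each of the $m$ vertices in each of the $m$ layers contributes $\Theta(n^{-1/4})$, giving the desired $\Theta(m^2\cdot n^{-1/4})=\Theta(n^{3/4})$ total.

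The main obstacle is this parameter calibration. A naive choice such as $r_1=m/(2\pi)$ (so that within-layer chords have length $1$) yields an average external angle closer to $\Theta(n^{-3/8})$ and hence only a $\Theta(n^{5/8})$ total; moreover, a pure concentric-circle packing of vertices cannot simultaneously achieve the $4$-density bound, have $n$ vertices, and give each vertex the full $\Theta(n^{-1/4})$ contribution. Obtaining the exact $n^{3/4}$ bound therefore requires either non-uniform radii, varying vertex counts across layers, or a non-circular layer shape chosen so that the within-layer and between-layer angular scales balance to $n^{-1/4}$ simultaneously. Once that calibration is achieved and density is checked, summing $\tau_i$ over all $n$ insertions gives $\sum \tau_i\ge n^{3/4}$ (with any small discrepancy absorbed by adjusting a few boundary points for small $n$).
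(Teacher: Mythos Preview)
Your construction---nested concentric circles with $\Theta(\sqrt n)$ points on each of $\Theta(\sqrt n)$ circles at unit radial spacing, ordered from inside out and then by angle---is essentially the paper's construction. The gap is in your analysis of the external angle, not in the construction.

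You claim that the external angle at a newly added vertex $v$ on circle $C_j$ is roughly the \emph{geometric mean} of the tangent angle $\arccos(r_{j-1}/r_j)\approx\sqrt{2/r_j}$ and the within-layer angular spacing $2\pi/m$. This is wrong: the two hull edges at $v$ lie on \emph{opposite sides} of the tangent line to $C_j$ at $v$. The edge to the previous same-layer vertex makes angle $\pi/m$ with that tangent (inscribed-angle theorem), and the edge toward the inner layers makes angle $\arccos(r_{j-1}/r_j)$ with it on the other side. Hence the external angle is their \emph{sum}, not their geometric mean, and it is bounded below by $\arccos(r_{j-1}/r_j)=\Theta(n^{-1/4})$ alone. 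Your ``naive'' choice $r_1=\Theta(\sqrt n)$ therefore already yields $\tau_i=\Theta(n^{-1/4})$ for every $i$ and $\sum\tau_i=\Theta(n^{3/4})$; no further calibration, non-uniform radii, or non-circular layers are needed, contrary to what you assert.

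The paper does exactly this. It takes radii $i\in\{k,\ldots,2k-1\}$ (so all radii are $\Theta(\sqrt n)$) and $k$ equally spaced points on each semicircle. For a point $p$ on $C_i$ it observes that every preceding point lies on the origin side of both the tangent $\ell$ to $C_i$ at $p$ and the tangent $\ell'$ from $p$ to $C_{i-1}$; the external angle at $p$ is therefore at least the angle between $\ell$ and $\ell'$, namely $\arcsin(\sqrt{2i-1}/i)\ge n^{-1/4}$. Summing gives $n^{3/4}$ directly.
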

\addtocounter{thm}{-1}
}

\begin{figure}
\centerline{\includegraphics{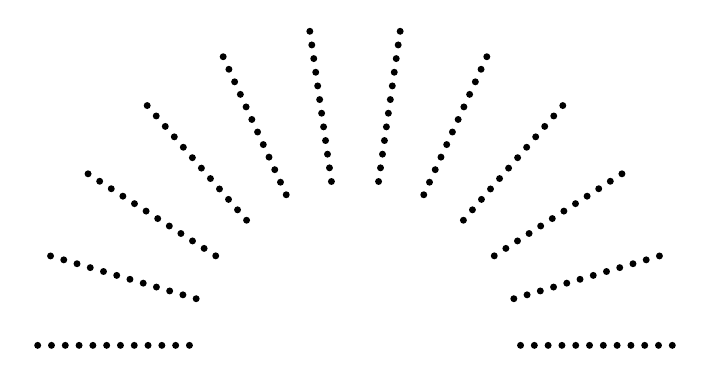}}
\caption{\label{fig_concentric}A dense point set that has a good stacking order.}
\end{figure}

\begin{proof}
Suppose for simplicity that $n=k^2$ for some integer $k\ge3$. Our point set $P_n$ consists of the points having polar coordinates
$(r, \theta) = (i, j\pi/(k-1))$ for $i\in\{k,k+1,\ldots,2k-1\}$ and $j\in\{0,1,\ldots,k-1\}$. See Figure~\ref{fig_concentric}. The smallest
distance between two points in $P_n$ is $1$, and the largest distance
is less than $4k$; thus, $P_n$ is $4$-dense, as required.

Our stacking order $f$ takes the points by increasing $r$, and for each $r$ by increasing~$\theta$.

We apply Lemma~\ref{lemma2.1} and calculate the sum of
the external angles determined by $f$. Denote by $C_i$ the circle of radius $i$ centered at the origin.
Consider a point $p\in P_n$ on $C_i$. Let $\ell$ be the ray leaving $p$ towards the right tangent to $C_i$, and let $\ell'$ be the ray leaving $p$ towards the left tangent to $C_{i-1}$. Let $q$ be the point of tangency between $\ell'$ and $C_{i-1}$. Then all the points of $P_n$ that precede $p$ in the order $f$ lie below $\ell$ and $\ell'$. Thus, the external angle $\tau$ contributed by $p$ is at
least the supplement $\alpha$ of the angle between $\ell$ and $\ell'$. We have 
$\alpha=\measuredangle p0q\ge\sin\alpha=\sqrt{2i-1}/i\ge n^{-1/4}$.
The theorem follows.
\end{proof}

\section{All Dense Sets Have Good Stacking Orders}

We now turn to Theorem~\ref{theorem4}.

{
\renewcommand{\thethm}{4}
\begin{thm} Every $C$-dense $n$-element point set $P$ in the plane admits a stacking order $f$ with
$$ \vis({\cal D}(P),f)\ge C''n^{2/3},$$
where $C''>0$ depends only on $C$.
\end{thm}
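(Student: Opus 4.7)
By Lemmas~\ref{lemma1} and~\ref{lemma2.1}, it suffices to find an ordering $p_1, \ldots, p_n$ of $P$ with $\sum_{i=1}^n \tau_i = \Omega(n^{2/3})$, where $\tau_i$ is the external angle of $p_i$ in the partial convex hull $\conv\{p_1, \ldots, p_i\}$ (and $0$ if $p_i$ lies in the interior of $\conv\{p_1, \ldots, p_{i-1}\}$). Since $\sum \tau_i$ is invariant under scaling, I normalize $P$ so that its minimum pairwise distance is $1$; by $C$-density, $P$ then lies inside a disk of radius $O(\sqrt n)$.

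My plan is to decompose a large subset of $P$ into a nested family of $m = \Omega(n^{2/3})$ disjoint convex chains $\Gamma_1 \supset \Gamma_2 \supset \cdots \supset \Gamma_m$, each in strictly convex position and of size $\Omega(n^{1/3})$, with $\Gamma_{j+1}$ lying inside $\conv(\Gamma_j)$. The existence of such a decomposition is to be obtained by iterative convex-chain peeling: a Jarnik/Andrews-style bound should give any $C$-dense $n$-point set a convex chain of length $\Omega(n^{1/3})$; after peeling, the residual set is still dense with spread $O(\sqrt n)$, so the process may continue for $\Omega(n^{2/3})$ iterations before the point count drops below a constant fraction of $n$. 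The stacking order is then defined by processing the chains from innermost outward --- $\Gamma_m$ first, traversed in cyclic convex order along its hull, then $\Gamma_{m-1}$, and so on up to $\Gamma_1$ --- with the remaining points of $P$ assigned the largest $f$-values at the very top.

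When the vertices of a chain $\Gamma_j$ are added one by one on top of a current hull contained in $\conv(\Gamma_{j+1})$, every added point lies strictly outside the current hull and so becomes a vertex of the intermediate hull with $\tau_i > 0$. The crucial claim, and the main obstacle of the proof, is that summed over the vertices of $\Gamma_j$ these contributions amount to at least a universal constant $c > 0$: intuitively, in transitioning from $\conv(\Gamma_{j+1})$ to $\conv(\Gamma_j)$ the hull must make a fixed total amount of outward angular progress, and this progress is distributed over the new vertices as external angles. Granted this, summing over the $m = \Omega(n^{2/3})$ chains yields $\sum \tau_i \ge cm = \Omega(n^{2/3})$, and Theorem~\ref{theorem4} follows.

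The two central technical steps I expect to be the most delicate are (i) verifying the per-chain $\Omega(1)$ bound by a careful local geometric analysis that uses the dense spacing of $\Gamma_j$ relative to $\Gamma_{j+1}$, and (ii) extending the Jarnik-type length bound from lattice points to arbitrary $C$-dense (1-separated) point sets inside a disk of radius $O(\sqrt n)$. The latter step is likely where the structural strength of $C$-density is consumed; a suitable general form may follow along the lines of Andrews' theorem on convex subsets of bounded point sets.
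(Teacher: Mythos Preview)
Your step (ii) is a genuine gap, and it is the one that cannot be repaired along the lines you suggest. The Jarn\'ik/Andrews bound---that a convex polygon with vertices in the set has $O(n^{1/3})$ vertices---relies on the arithmetic structure of the integer lattice, not merely on $1$-separation. For an arbitrary $C$-dense set it is simply false: take $\Theta(\sqrt n)$ equally spaced points on a circle of radius $\Theta(\sqrt n)$ and fill the interior to make the set $C$-dense. The outer convex layer then has $\Theta(\sqrt n)$ vertices, not $O(n^{1/3})$. More globally, for the ``concentric circles'' configuration (circles of radii $1,2,\ldots,\Theta(\sqrt n)$, each carrying $\Theta(r)$ points), convex-layer peeling yields only $\Theta(\sqrt n)$ layers. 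Since your per-layer bound is $\Omega(1)$, your analysis gives only $\sum\tau_i=\Omega(\sqrt n)$ for such sets---no better than the Erd\H os--Szekeres baseline. (Your order may in fact do better on that particular example, but your \emph{argument} does not show it.) Note also that you have the direction of (ii) reversed: to get many layers you need an \emph{upper} bound on layer size, not the AKP/Valtr lower bound on the longest convex chain; and if you peel arbitrary long convex chains rather than hull layers, you lose the nesting $\conv(\Gamma_{j+1})\subset\conv(\Gamma_j)$ that your per-layer estimate needs.

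Ironically, your step (i)---which you flag as the delicate one---is straightforward and does not even require the cyclic order. Writing $\tau(p)$ as the measure of directions $\theta$ in which $p$ is extremal in the current hull, one has $\sum_{p\in\Gamma_j}\tau(p)=\int_0^{2\pi}\#\{\text{records in }\Gamma_j\text{ in direction }\theta\}\,d\theta$; and for every $\theta$ the $\theta$-extremal vertex of $\Gamma_j$ beats all earlier points (since $\conv(\Gamma_{j+1})\subset\conv(\Gamma_j)$), so the integrand is at least $1$ and each layer contributes at least $2\pi$. This is precisely why the approach works effortlessly for the grid (Section~4 of the paper), where Jarn\'ik genuinely bounds each layer by $O(n^{1/3})$. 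The paper's proof of Theorem~\ref{theorem4} abandons convex peeling for general dense sets and instead builds the order from a random family of thin annular sectors viewed from a far point $p$, showing that $\Omega(n)$ sectors are singly occupied and that each selected point then contributes $\Omega(n^{-1/3})$; this is where the work that replaces Jarn\'ik actually happens.
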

\addtocounter{thm}{-1}
}

Throughout this section, let $P$ be a $C$-dense $n$-point set
in the plane. We will define a stacking order $f$ for $\mathcal D(P)$ for
which the
external angles $\tau_i$ defined in Lemma~\ref{lemma2.1} satisfy
$\sum_{i=1}^n \tau_i \ge C''n^{2/3}$, for some constant
$C''>0$ depending only on $C$. Then the theorem follows from
Lemma~\ref{lemma2.1}.

Assume without loss of generality that the minimum  distance in
$P$ is $1$. Then, since $P$ is $C$-dense, there exists a disk
of radius $C n^{1/2}$ that contains all of $P$. Let $D$ be such
a disk, and let $K$ be a circle of radius $2Cn^{1/2}$
concentric with $D$.

\begin{figure}
\centerline{\includegraphics{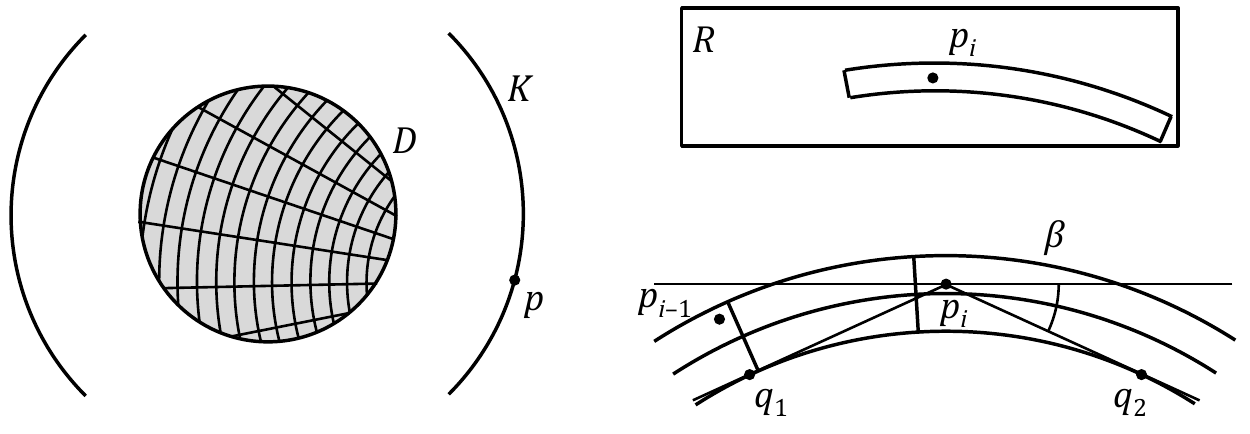}}
\caption{\label{fig_annular_sectors}Left: Partition of $D$ into annular sectors
centered at a point $p\in K$. Top right: The sector containing $p_i$ is contained in
the rectangle $R$ centered at $p_i$. Bottom right: Point $p_i$ contributes
external angle at least $\beta$.}
\end{figure}

Given a point $p\in K$, we define a family $F = F(p)$  of
annular sectors that disjointly cover the plane, as follows:
For each positive integer $i$, let $K_i = K_i(p)$ be a circle
centered at $p$ with radius $i n^{-1/6}$; then divide each
annulus between two consecutive circles into sectors of angular
length $\alpha = C^* n^{-1/3}$ for a large enough constant
$C^*$ (as will be specified below). See
Figure~\ref{fig_annular_sectors} (left).

Note that each annular sector that intersects $D$ has area
$\Theta(1)$ (since the radius of such a sector is $\Theta(n^{1/2})$).
The number of annular sectors that intersect $D$ is $\Theta(n^{1/2} n^{1/6} n^{1/3}) =\Theta(n)$.
Call a sector \emph{occupied} if it contains at
least one point of $P$.

\begin{lem}\label{lemma3.1} There exists a point $p\in K$
for which $\Omega(n)$ sectors of $F(p)$ are occupied.
\end{lem}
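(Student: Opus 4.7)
My plan is to run a second-moment averaging argument over $p$ sampled uniformly (by arc length) from $K$. Let $N(p)$ denote the number of occupied sectors of $F(p)$ and set $M(p) := \sum_{s \in F(p)} |s \cap P|^2$. Since $\sum_s |s \cap P| = n$, the Cauchy--Schwarz inequality gives $N(p) \cdot M(p) \ge n^2$, so $N(p) \ge n^2/M(p)$. Expanding $|s\cap P|^2$ as a sum over pairs and taking the expectation over $p$,
\begin{equation*}
\mathbb{E}[M(p)] \;=\; n + \sum_{\substack{q_1, q_2 \in P\\ q_1 \neq q_2}} \Pr\bigl[q_1, q_2 \text{ lie in a common sector of } F(p)\bigr],
\end{equation*}
so it suffices to show the off-diagonal sum is $O(n)$: averaging then produces a $p^\star \in K$ with $M(p^\star) = O(n)$, hence $N(p^\star) \ge n^2/M(p^\star) = \Omega(n)$.

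The heart of the argument is a bound on the pairwise probability. I fix $q_1 \neq q_2 \in P$, write $d := |q_1 - q_2|$, $m := (q_1 + q_2)/2$, and note that lying in a common sector requires both (i) $\bigl||p - q_1| - |p - q_2|\bigr| < n^{-1/6}$ (same annulus) and (ii) $\angle q_1 p q_2 < \alpha = C^* n^{-1/3}$ (within one angular bin). The identity $|p-q_1|^2 - |p-q_2|^2 = 2(m - p)\cdot(q_1 - q_2)$, divided by $|p-q_1| + |p-q_2| = \Theta(n^{1/2})$, yields $|p - q_1| - |p - q_2| = -d\cos\phi + O(d^2/n^{1/2})$, where $\phi$ is the angle at $m$ between $p - m$ and $q_1 - q_2$; a parallel expansion yields $\angle q_1 p q_2 = d|\sin\phi|/|p-m| + O(d^2/n)$. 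So (i) translates into $|\cos\phi| = O(n^{-1/6}/d)$ and (ii) into $|\sin\phi| = O(n^{1/6}/d)$, and the relation $\sin^2\phi + \cos^2\phi = 1$ forces $d = O(n^{1/6})$.

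To conclude, I would show that the uniform distribution of $p$ on $K$ (radius $2Cn^{1/2}$), combined with $|m| \le Cn^{1/2}$, induces a distribution on $\phi$ whose density on $[0,2\pi)$ is bounded above by an absolute constant (the Jacobian $d\phi/d\beta$ can be computed directly and is pinned between two positive constants), so $\Pr[|\cos\phi| < s] = O(s)$ for $s \le 1$. Combined with the cutoff from the previous paragraph, this gives $\Pr[\text{same sector}] = O(n^{-1/6}/d)$ for $d \le C_1 n^{1/6}$ and $0$ otherwise. The minimum-distance-$1$ hypothesis yields the standard packing bound $|\{q \in P : |q - q_1| \in [k, k+1)\}| = O(k)$, so
\begin{equation*}
\sum_{q_2 \neq q_1} \Pr[\text{same sector}] \;\le\; \sum_{k=1}^{O(n^{1/6})} O(k)\cdot O\!\left(\frac{n^{-1/6}}{k}\right) \;=\; O(1),
\end{equation*}
which on summing over $q_1 \in P$ gives $\mathbb{E}[M(p)] = O(n)$, completing the proof.

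The main obstacle I anticipate is the joint geometric bookkeeping in the middle paragraph: the cutoff $d = O(n^{1/6})$ comes from the incompatibility of simultaneously small $\cos\phi$ and $\sin\phi$, and without this cutoff the packing-weighted sum would diverge (as $n^{1/3}$ per point). The constant $C^*$ in the definition of $\alpha$ plays no role here beyond affecting the constants $C_1$ and the implicit ones in the final $O(n)$.
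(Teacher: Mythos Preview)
Your proof is correct and follows essentially the same approach as the paper: choose $p$ uniformly on $K$, bound the expected number of same-sector pairs via the $O(n^{-1/6}/d)$ probability estimate (with the same $d=O(n^{1/6})$ cutoff) together with the packing bound, and conclude by averaging. The paper packages the argument slightly differently---it bounds $E[n(p_i)]$ for each point by enclosing the sector in a $\Theta(n^{1/6})\times\Theta(n^{-1/6})$ rectangle (in place of your $\phi$-analysis) and then applies Markov's inequality to count ``isolated'' points rather than invoking Cauchy--Schwarz on $M(p)=\sum_i n(p_i)$---but the underlying computation is the same.
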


\begin{proof} Choose $p$ uniformly at random on $K$ and construct the
sectors using $p$ and dividing the annuli into the correct-length sectors in
an arbitrary way.
For each point $p_i\in P$, define the random variable $n(p_i)$
to be the number of points of $P$ contained in the sector of
$F(p)$ that contains $p_i$. We claim that the expected value
$E[n(p_i)]$ of $n(p_i)$ satisfies
\begin{equation*}
E[n(p_i)] \le k
\end{equation*}
for some constant $k$.

Indeed, let $R = R_{p_i}(p)$ be a rectangle centered at $p_i$,
with dimensions $(k' n^{1/6})\times(k' n^{-1/6})$, and with
short sides parallel to the line $pp_i$, for an appropriate
constant $k'$. If $k'$ is large enough (but constant with
respect to $n$), then $R$ completely contains the sector of
$F(p)$ that contains $p_i$. See Figure~\ref{fig_annular_sectors} (top right).
Thus, it suffices to bound the
expected number of points of $P$ in $R$. Note that, as $p$
rotates around $K$, $R$ rotates around its center together with
$p$.

Partition the plane into annuli centered at $p_i$ by tracing
circles around $p_i$ of radii $1,2,4,8,\ldots$. The annulus
with inner radius $r$ and outer radius $2r$ contains at most
$k_2 r^2$ points of $P$, for some constant $k_2$. Each such
point has probability at most $k_3 n^{-1/6}r^{-1}$ of falling
in $R$ (over the choice of $p$), for another constant $k_3$;
therefore, the expected contribution of this annulus to
$n(p_i)$ is at most $k_2k_3 r n^{-1/6}$. Summing up for all
annuli with inner radius $r \le k' n^{1/6}$, we obtain that
$E[n(p_i)] \le k$ for some constant $k$, as claimed.

Now, call point $p_i$ \emph{isolated} if $n(p_i) \le 2k$. By
Markov's inequality, each point $p_i$ has probability at least
$1/2$ of being isolated. Therefore, the expected number of
isolated points is at least $n/2$. There must exist a $p$ that
achieves this expectation, and for it we obtain at least
$n/(4k)$ occupied sectors, proving the lemma.
\end{proof}

\begin{proof}[Proof of Theorem~\ref{theorem4}] Fix a point $p$ for which
$F(p)$ has $\Omega(n)$ occupied sectors. Color the sectors with
four colors, using colors $1$ and $2$ alternatingly on the
odd-numbered annuli and colors $3$ and $4$ alternatingly on the
even-numbered annuli.

There must be a color for which $\Omega(n)$ sectors are
occupied. Consider only the occupied sectors with this color.
Let these sectors be $S_1, S_2, \ldots, S_m$, listed by
increasing distance from $p$, and for each fixed distance, in
clockwise order around $p$. Select one point $p_i\in P\cap S_i$
from each of these sectors. Let the stacking order $f$ start with these
points, that is, $f(\mathcal D(p_i))=i$ for $i=1,\ldots,m$. The order of the
remaining points in $P$ is arbitrary.

We claim that each selected point $p_i$ contributes an external
angle of $\tau_i = \Omega(n^{-1/3})$, which implies that $\sum
\tau_i = \Omega(n^{2/3})$, as desired.

Indeed, consider the $i$-th selected point $p_i$. Suppose
without loss of generality that $p$ lies directly below $p_i$.
Let $K_k$ and $K_{k+1}$ be the inner and outer circles bounding
the annulus that contains $p_i$. Trace rays $z_1$ and $z_2$
from $p_i$ tangent to $K_{k-1}$, touching $K_{k-1}$ at points
$q_1$ and $q_2$. See Figure~\ref{fig_annular_sectors} (bottom right).

Every point $p_j$, $j<i$, that is \emph{not} contained in the
same annulus as $p_i$ lies below these rays. Moreover, the
angle $\beta$ that these rays make with the horizontal is
$\Theta(n^{-1/3})$: Consider, for example, the ray $z_1$. The
triangle $pp_iq_1$ is right-angled, with angle $\measuredangle
p_ipq_1 = \beta$. We have $pq_1 = \Theta(n^{1/2})$ and $pp_i =
pq_1 + \Theta(n^{-1/6})$. It follows that $p_iq_1 =
\Theta(n^{1/6})$, and so $\beta \approx \tan \beta = p_iq_1 /
pq_1 = \Theta(n^{-1/3})$.

Now suppose that $p_{i-1}$ lies in the same annulus as $p_i$.
If the constant $C^*$ in the definition of $\alpha$ is chosen
large enough, then $p_{i-1}$ must have a smaller $y$-coordinate
than $p_i$. (In the worst case, $p_i$ lies near the bottom-left
corner of its sector and $p_{i-1}$ lies near the top-right
corner of its sector.)

Thus, $p_i$ contributes external angle $\tau_i\ge \beta =
\Omega(n^{-1/3})$, as claimed.
\end{proof}

\section{The ``Worst'' Dense Set: the Grid}

In this section, we assume that $n$ is a square number and $G_n$ denotes an $n^{1/2}$ by $n^{1/2}$ integer grid. Note that $G_n$ is a $\sqrt2$-dense set consisting of $n$ points.

As we mentioned in the Introduction, in the special case where $P=\eps G_n$, Theorem~\ref{theorem4} has a simple proof. For $\mathcal D(\eps G_n)$, one can produce a stacking order with large visible perimeter using the following greedy algorithm (which can also be applied to any other point set $P$): Set $P_n=G_n$, and select a vertex of $\conv(P_n)$ whose external angle is maximum. Let this vertex be $p_n$, the last element in the desired order $f_{\text{greedy}}$. Repeat the same step for the set $P_{n-1} = P_n \setminus \{p_n\}$, and continue in this fashion until the first element $p_1$ gets defined.

By Jarnik's theorem~\cite{Ja25}, every convex polygon has $O(n^{1/3})$ vertices in $G_n$. Therefore, at each step, the greedy algorithm selects a point $p_i$ that makes an external angle $\tau_i = \Omega(n^{-1/3})$. Hence, $\sum\tau_i = \Omega(n^{2/3})$ for the order $f_{\text{greedy}}$. Lemma~\ref{lemma2.1} completes the proof.

Now we turn to Theorem~\ref{theorem5}.

{
\renewcommand{\thethm}{5}
\begin{thm}
Let $n$ be a perfect square and let $G_n$ denote an $n^{1/2}$ by $n^{1/2}$ uniform grid in the plane. For any stacking order $f$, we have
$$\lim_{\eps\rightarrow 0}\vis({\cal D}(\eps G_n),f)=O(n^{2/3}).$$
\end{thm}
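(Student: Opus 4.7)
The plan is to apply Lemma~\ref{lemma2.1} and reduce the statement to showing $\sum_{i=1}^n \tau_i = O(n^{2/3})$ for every ordering $p_1,\ldots,p_n$ of the points of $G_n$, where $\tau_i$ is the external angle at $p_i$ in $H_i = \conv\{p_1,\ldots,p_i\}$ (and zero when $p_i$ lies in $H_{i-1}$). Rescale so that the minimum interpoint distance is $1$; then the grid has diameter $\Theta(n^{1/2})$.

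Two global tools drive the argument. First, the perimeter estimate from the proof of Theorem~\ref{theorem2}: since the minimum interpoint distance is $1$, each new-vertex insertion $p_i$ increases the current hull's perimeter by at least $\tau_i^2/5$, so $\sum_i \tau_i^2 \le 5\,\per(H_n) = O(n^{1/2})$. Second, Jarnik's theorem: every convex polygon with vertices in $G_n$ has at most $C n^{1/3}$ vertices, and in particular $|V(H_i)| \le C n^{1/3}$ for every $i$.

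Using only the perimeter bound with the trivial inequality $|\{i:\tau_i>0\}|\le n$, Cauchy--Schwarz yields $\sum \tau_i \le \sqrt{n\cdot O(n^{1/2})} = O(n^{3/4})$, which exactly recovers Theorem~\ref{theorem2}. The sharpening to $O(n^{2/3})$ must use Jarnik's theorem. I would partition the insertion sequence into $\Theta(n^{2/3})$ consecutive blocks of size $k=\Theta(n^{1/3})$ and aim to show that each block contributes $O(1)$ to $\sum \tau_i$ on average. Within a block $B_j$, the new-vertex events split into \emph{survivors}, whose corresponding points remain vertices of the end-of-block hull $H_{\max B_j}$, and \emph{swallowed} events, in which the newly-added vertex is displaced by a subsequent insertion within the same block. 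By Jarnik's theorem, there are at most $C n^{1/3}$ survivors in any block, and the sum of their external angles measured in $H_{\max B_j}$ is at most $2\pi$; the angular mass of swallowed vertices calls for an amortized accounting.

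The main obstacle is to make the amortization rigorous. A priori, a surviving vertex $p_i$ may have its insertion-time angle $\tau_i$ much larger than its (smaller) angle in the end-of-block hull, and the missing angular mass must be charged somewhere; similarly, each swallowed vertex's angular contribution must be paid for by the subsequent insertion that displaces it. I would carry this out by pairing each unit of lost angular mass with the insertion that absorbed it, combining the resulting bookkeeping with the global perimeter constraint $\sum\tau_i^2 = O(n^{1/2})$ to control the concentration of large angles within any one block. If the per-block total can be shown to be $O(1)$ in this amortized sense, summing over the $\Theta(n^{2/3})$ blocks gives the desired bound $\sum\tau_i = O(n^{2/3})$.
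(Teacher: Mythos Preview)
Your proposal has a genuine gap: the amortization you describe is not carried out, and in fact the tools you isolate (Jarnik plus the global bound $\sum_i\tau_i^2=O(n^{1/2})$) do not combine in any evident way to give the per-block estimate you need. The ``angular-mass'' bookkeeping is circular: if $B=[a,b]$ and $\alpha_j^{(t)}$ is the external angle of $p_j$ in $H_t$, conservation ($\sum_j\alpha_j^{(t)}=2\pi$) telescopes to the identity
\[
\sum_{t\in B}\tau_t \;=\; \underbrace{\sum_{j<a}\bigl(\alpha_j^{(a-1)}-\alpha_j^{(b)}\bigr)}_{\le 2\pi}\;+\;\Bigl(\sum_{t\in B}\tau_t-\sum_{j\in B}\alpha_j^{(b)}\Bigr),
\]
which carries no information about $\sum_{t\in B}\tau_t$. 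And if instead you feed the two global constraints through Cauchy--Schwarz block by block, you only recover $O(n^{3/4})$: averaging $\sum_i\tau_i^2=O(n^{1/2})$ over $n^{2/3}$ blocks gives $\sum_{i\in B}\tau_i^2=O(n^{-1/6})$ on average, hence $\sum_{i\in B}\tau_i\le\sqrt{|B|\cdot O(n^{-1/6})}=O(n^{1/12})$, and summing over blocks returns $O(n^{3/4})$. Jarnik bounds $|V(H_t)|$ at a \emph{fixed} time $t$, but $\sum_{i\in B}\tau_i$ mixes angles from $|B|$ different hulls; your survivor/swallowed dichotomy does not bridge that.

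The paper's proof does \emph{not} use Jarnik at all. It refines the perimeter argument of Theorem~\ref{theorem2} by tracking, instead of the total perimeter $\per(i)=\int\rho_i(\alpha)\,d\alpha$, the width $\rho_i$ in directions near \emph{short grid vectors}. Each index $i$ with $\tau_i\ge n^{-1/3}$ and small ``reach'' $c_i$ is charged to a pair $(v,\ell)$, where $v$ is a nearby short lattice vector and $V_{v,\ell}$ is a dyadic arc of directions around $v$; the growth of $\int_{V_{v,\ell}}\rho_i$ then controls $\sum_{i}\tau_i$ over the indices charged to $(v,\ell)$, and summing over all $|v|\le n^{1/6}$ and the admissible $\ell$ gives $O(n^{2/3})$. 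The lattice enters through counting grid vectors of each length, not through Jarnik.
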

\addtocounter{thm}{-1}
}

Our proof is an improved version of the
proof of Theorem~\ref{theorem2}. There we were concerned with how the {\em perimeter} of the
convex hull grows as we add the points of our set one by one as prescribed by
the stacking order. As is well known, the perimeter of a convex set in the plane is the
integral of its width in all directions (this is known as Cauchy's theorem; see {\it e.g.} \cite{PaA95}, Theorem 16.15).
The proof of Theorem~\ref{theorem5} is very
similar, but we deal with the widths in different directions
in a non-uniform way. The width in a direction close to the
direction of a short grid vector is more important in the analysis than
widths in other directions.

\begin{proof}[Proof of Theorem~\ref{theorem5}.]
Let $G_n=\{p_1,\ldots,p_n\}$ be an enumeration of the points of $G_n$ according to a given stacking order, and let $\tau_i$ denote the corresponding external angles, as defined in Lemma~\ref{lemma2.1}. According to the lemma, we need to prove that $\sum_{i=1}^n\tau_i=O(n^{2/3})$. Let us partition this sum into several parts, and bound the contribution of each part separately.

Let $[n]=\{1,\ldots,n\}$. We start with the small angles. Let
$$I_0=\{i\in[n]\mid\tau_i<n^{-1/3}\}.$$ Clearly, we have $$\sum_{i\in
I_0}\tau_i<n\cdot n^{-1/3}=n^{2/3}.$$

\begin{figure}
\centerline{\includegraphics{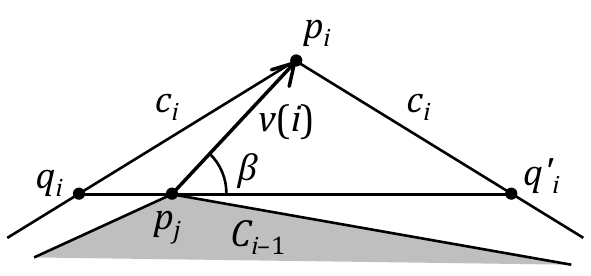}}
\caption{\label{fig_v_beta}The triangle $p_iq_iq'_i$ is the largest isosceles triangle at point $p_i$ that does not intersect the interior of $C_{i-1}$.}
\end{figure}

As in the proof of Theorem~\ref{theorem2}, let $C_i=\conv\{p_1,\ldots,p_i\}$ and denote the perimeter of $C_i$ by $\per(i)$. Since $G_n$ is an $n^{1/2}\times n^{1/2}$ integer grid, we have $\per(n)=4(n^{1/2}-1)$. Consider only those indices $i>1$ that do not belong to $I_0$. For these indices, we have $\tau_i>0$, so that $p_i$ must be a vertex of $C_i$. For each such point $p_i$, let $c_i$ denote the smallest number satisfying the following condition: the segment connecting the points $q_i$ and $q_i'$ that lie on the boundary of $C_i$ at distance $c_i$ from $p_i$, intersects $C_{i-1}$. (In the case where $C_i$ is a segment, we have $q_i=q_i'\in C_{i-1}$.) Note that the segment $q_iq_i'$ contains a point $p_j$ with $1\le j<i$.
See Figure~\ref{fig_v_beta}.

In the proof of Theorem~\ref{theorem2}, we argued that $\per(i)-\per(i-1)\ge\tau_i^2/5$. Now the same
argument gives that $\per(i)-\per(i-1)>c_i\tau_i^2/5$. Let
$$I_1=\{i\in[n]\setminus(I_0\cup\{1\})\mid c_i\tau_i>n^{-1/6}\}.$$ For $i\in
I_1$, we have $\per(i)-\per(i-1)\ge\tau_in^{-1/6}/5$. Since $\per(i)$ is
monotone in $i$, we conclude that $$\sum_{i\in
I_1}\tau_i\le5n^{1/6}(\per(n)-\per(1))<20n^{2/3}.$$

Let $$I_2=[n]\setminus(\{1\}\cup I_0\cup I_1).$$ To bound the angles $\tau_i$
for indices $i\in I_2$, we need a charging scheme and we need to consider the
growth of the width of $C_i$ in some specific directions. The {\em width} of a
planar set in a given direction is the diameter of the orthogonal projection
of the set to a line in this direction. Let us associate the directions in the
plane with the points of the unit circle $K$. We identify opposite points of
this circle as the widths of the same set in opposite
directions are the same. This makes the total length of $K$ become $\pi$. We
define a set of arcs along $K$ as follows.
For any non-zero grid vector $v$ from the integer grid and for any integer
$\ell\ge0$, let $V_{v,\ell}$ denote the arc of length $2^{-\ell}$ symmetric
around the direction of $v$. For any direction $\alpha\in K$, let
$\rho_i(\alpha)$ denote the width of $C_i$ in the direction {\em orthogonal} to
$\alpha$ ({\em i.e.}, where the corresponding projection is parallel to $\alpha$).

The perimeter $\per(i)$ is equal to the integral of $\rho_i(\alpha)$ along the circle $K$ (note that after the identification of opposite points the length of $K$ became $\pi$). We have $\rho_i(\alpha)=\rho_{i-1}(\alpha)$, unless the direction $\alpha$ is tangent to $C_i$ at the vertex $p_i$. Let $U_i$ denote the arc of directions where such a tangency occurs. Clearly, the length of $U_i$ is $\tau_i$, and for any arc $V$ that contains $U_i$, we have
$$\int_{V}(\rho_i(\alpha)-\rho_{i-1}(\alpha))d\alpha=\per(i)-\per(i-1)\ge
c_i\tau_i^2/5.$$

For each index $i\in I_2$, choose a grid point $p_j$ on the segment $q_iq'_i$. (Recall that the points $q_i$ and $q'_i$ are at distance $c_i$ from $p_i$, and that there is always a grid point between them.) We {\em charge} the index $i$ to the pair $(v(i),\ell(i))$, where $v(i)$ is the grid vector pointing from $p_j$ to $p_i$ and $\ell(i)$ is the largest integer such that $V_{v(i),\ell(i)}$ contains $U_i$. Notice that $|v(i)|\le c_i$. Denote by $I_2(v,\ell)$ the set of indices $i\in I_2$ that are charged to the pair $(v,\ell)$.

Note that $U_i$ is symmetric around the direction of the segment
$q_iq'_i$. For the angle $\beta$ between this direction and the
direction of $v(i)$ we have
$|v(i)|\sin\beta=c_i\sin(\tau_i/2)$ (refer again to Figure~\ref{fig_v_beta}). This implies
$\beta<c_i\tau_i/|v(i)|$, and hence
$2^{-\ell(i)}<4\beta+2\tau_i<6c_i\tau_i/|v(i)|$. Finally, we also have
$$\int_{V_{v(i),\ell(i)}}(\rho_i(\alpha)-\rho_{i-1}(\alpha))d\alpha\ge
c_i\tau_i^2/5>2^{-\ell(i)}|v(i)|\tau_i/30.$$

Let $s(v,\ell)=\sum_{i\in I_2(v,\ell)} \tau_i$. The integral $\int_{V_{v,\ell}}\rho_i(\alpha)d\alpha$
is monotone in $i$ and grows by at least $2^{-\ell}|v|\tau_i/30$ at every
$i \in I_2(v,\ell)$. We have $\rho_1(\alpha)=0$ and
$\rho_n(\alpha)<(2n)^{1/2}$, so that the final integral satisfies
$\int_{V_{v,\ell}}\rho_n(\alpha)d\alpha \le 2^{-\ell}(2n)^{1/2}$. Therefore, $\sum_{i\in I_2(v,\ell)} \bigl( 2^{-\ell}|v|\tau_i/30 \bigr) \le 2^{-\ell}(2n)^{1/2}$, which implies that $s(v,\ell)\le 30\sqrt{2}n^{1/2}/|v|$.

Consider the set of all pairs $(v,\ell)$ such that there is an index $i\in I_2$ charged to them. We have $c_i\tau_i\le
n^{-1/6}$, $\tau_i\ge n^{-1/3}$ and $|v|\le c_i$, which implies that $|v|\le
n^{1/6}$. We proved that $2^{-\ell}<6c_i\tau_i/|v|\le6n^{-1/6}/|v|$.
On the other hand, we also have $2^{-\ell}\ge2\tau_i\ge2n^{-1/3}$. Thus,
for any given grid vector $v$, there are at most $\log(6n^{1/6}/|v|)$
possible values of $\ell$, where $\log$ denotes the binary logarithm.

Hence, $$\sum_{i\in I_2}\tau_i=\sum_{v,\ell}s(v,\ell)\le\sum_{|v|\le
n^{1/6}}\frac{30\sqrt2n^{1/2}}{|v|}\log\frac{6n^{1/6}}{|v|}.$$
To evaluate this sum, we note that the number of grid vectors $v$ satisfying $2^k\le |v| < 2^{k+1}$ is $\Theta(2^{2k})$. Thus,
$$\sum_{i\in I_2}\tau_i = O{\left( n^{1/2}\sum_{k=0}^{\log n^{1/6}} (\log n^{1/6} - k) 2^k \right)} = O(n^{1/2}n^{1/6}) = O(n^{2/3}).$$
In conclusion, we have $$\sum_{i=1}^n\tau_i=\tau_1+\sum_{i\in I_0}\tau_i+\sum_{i\in
I_1}\tau_i+\sum_{i\in
I_2}\tau_i=2\pi+O(n^{2/3})+O(n^{2/3})+O(n^{2/3})=O(n^{2/3}),$$
completing the proof of the theorem.
\end{proof}

\section{Collections of disks with bounded overlap}

In this section, we prove Theorem~\ref{thm_bounded_overlap}.

{
\renewcommand{\thethm}{6}
\begin{thm}
Let $\cal D$ be a collection of $n$ unit disks in which at most $c$ disks have a point in common. Then there exists a stacking order $f$ for which
$$\vis({\cal D}, f) = \Omega(v(c) n / c),$$
where $v(c)$ is given in (\ref{eq_def_v}). This bound is worst-case asymptotically tight.
\end{thm}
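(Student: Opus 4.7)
I plan to locate a large subfamily of $\mathcal D$ that partitions into mutually non-interacting clusters of at most $c$ disks each, stack these clusters at the bottom of the order with each one ordered internally so as to realize visible perimeter $\ge v(|C_i|)$, and then relate each $v(|C_i|)$ to $v(c)$ using elementary monotonicity and subadditivity of $v$.

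First I would overlay a $1\times 1$ grid on the plane. Since any $1\times 1$ square fits inside a unit disk, the hypothesis that at most $c$ disks share a point forces each cell to contain at most $c$ centers. Next I would $9$-color the cells by $(i\bmod 3,\,j\bmod 3)$, so that distinct same-color cells have Chebyshev separation $\ge 3$; points in distinct same-color cells are then at Euclidean distance $\ge 2$, and the corresponding unit disks have disjoint interiors. By pigeonhole, one color class $\mathcal C$ carries at least $n/9$ disks; write $C_i\subseteq\mathcal D$ for the disks whose centers lie in the $i$-th cell of $\mathcal C$, so $|C_i|\le c$ and $\sum_i|C_i|\ge n/9$. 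Define the stacking $f$ by placing all of $\bigcup_i C_i$ at the bottom, grouping each $C_i$ contiguously and ordering it internally so as to realize visible perimeter $\ge v(|C_i|)$ (possible by the definition of $v$), and placing the remaining disks on top in an arbitrary order. Because the $C_i$'s are pairwise non-interacting and the top disks cannot block anything stacked below, $\vis(\mathcal D,f)\ge\sum_i v(|C_i|)$.

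To finish I would establish two short properties of $v$ from its infimum definition. Monotonicity $v(k)\le v(k+1)$: any stacking of a $k$-family extends to a stacking of any $(k{+}1)$-superfamily by placing the extra disk on top, without decreasing the visible perimeter, so $\max_f\vis$ decreases weakly when a disk is removed. Subadditivity $v(\ell k)\le\ell\,v(k)$: $\ell$ far-apart translated copies of an almost-$v(k)$-extremal $k$-family provide an explicit $\ell k$-family whose best stacking has visible perimeter $\le\ell\,v(k)+o(1)$. Applying these with $\ell=\lceil c/|C_i|\rceil\le 2c/|C_i|$ for $|C_i|\le c$ yields $v(|C_i|)=\Omega(|C_i|\,v(c)/c)$, and summing gives $\vis(\mathcal D,f)\ge\sum_i v(|C_i|)=\Omega(v(c)\,n/c)$. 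For the matching upper bound I would take $\lfloor n/c\rfloor$ well-separated translates of a $c$-family whose best stacking is within $o(1)$ of $v(c)$, together with a single tight cluster containing the at most $c$ leftover disks; every stacking of the resulting arrangement attains visible perimeter at most $\lfloor n/c\rfloor(v(c)+o(1))+O(c)=O(v(c)\,n/c)$. The main obstacle I foresee is the concluding inequality $v(|C_i|)=\Omega(|C_i|v(c)/c)$: both monotonicity and subadditivity are intuitive, but they must be passed carefully through the infimum definition, and one must verify that the per-cluster internal optima can indeed all be realized simultaneously inside a single global stacking, despite the presence of the disks outside $\mathcal C$.
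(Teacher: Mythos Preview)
Your argument is correct and follows essentially the same route as the paper: grid-partition the centers into pairwise non-interacting clusters of size $O(c)$, stack each cluster optimally at the bottom with the remaining disks behind, and use subadditivity $v(\ell k)\le\ell\,v(k)$ together with monotonicity to turn $\sum_i v(|C_i|)$ into $\Omega(v(c)\,n/c)$, with tightness via far-apart worst-case $c$-families. The only difference is cosmetic: the paper uses a randomly shifted $4\times4$ grid and keeps the disks lying entirely in one cell (retaining $n/4$ of them, with $k_i\le(16/\pi)c$ per cell by an area argument), whereas you use a deterministic $3$-periodic coloring of a unit grid on the centers (retaining $n/9$, with $|C_i|\le c$ directly from the overlap hypothesis).
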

\addtocounter{thm}{-1}
}

Note that Lemma~\ref{lemma2.1} is not relevant in this case, since we cannot contract the set of centers of ${\cal D}$.

\begin{proof}
Partition the plane into an infinite grid of axis-parallel square cells of side-length $4$, where the position of the grid is chosen uniformly at random. For each unit disk, the probability that it belongs entirely to a single cell is $1/4$. Thus, we
can fix the grid in such a way that at least $n/4$ disks lie entirely in a cell. Let $k_i$ be the number of disks entirely contained in cell $i$. By area considerations, we have $k_i \le (16/\pi)c$.

For each cell $i$, we independently select a stacking order that achieves visible perimeter at least $v(k_i)$; then we place all the remaining disks behind them. Thus, our stacking order achieves visible perimeter at least $\sum_i v(k_i)$.

For any $n$-element point set $\cal D$, we can take an $rn$-element point set
$\cal D'$ as the union of $r$ pairwise disjoint translates of $\cal D$. We
clearly have $\max_f\vis({\cal D}',f)\le r\max_f\vis({\cal D},f)$. This implies that
$v(rn) \le r v(n)$. Let $r_i=\lceil c/k_i\rceil\le(16/\pi)c/k_i$, and we have
$v(c)\le v(r_ik_i)\le r_iv(k_i)$, thus
$$v(k_i)\ge v(c)/r_i\ge\frac{\pi v(c)}{16 c}k_i.$$

Since $\sum k_i \ge n/4$, the claimed bound follows.

To show that this bound is worst-case asymptotically tight, take the union of $\lceil
n/c\rceil$ worst-case sets of $c$ disks far from each other.
\end{proof}

\section{Concluding remarks}

\noindent{\bf A.} The greedy algorithm described at the beginning of
Section~4 was first considered by Cabello {\em et al.}\ (unpublished) in the context
of maximizing the \emph{minimum} visible perimeter of a single disk.
They showed that the order $f_{\text{greedy}}$ is always optimal
for this purpose. Unfortunately, this stacking order is \emph{not}
always optimal with respect to the total visible perimeter. Indeed, let $n$ be
a perfect square and consider the set of points $\{p_i\mid1\le i\le n\}$, where
the polar coordinates of $p_i$ are $(r_i, \theta_i) = (e^{bi},2\pi i/n^{1/2})$ with $b>0$
sufficiently small. This point set is obtained as the intersection of
$n^{1/2}$ equally-spaced rays
emanating from the origin, and $n^{1/2}$ ``rounds'' of a very tight
logarithmic spiral centered at the origin. The greedy algorithm produces the
stacking order indicated by the indices, so it takes the
points of $P$ outwards along the spiral. The contribution $\tau_i$ is equal for
every point $p_i$ with $i\ge n^{1/2}$ and tends to $2\pi/n^{1/2}$ as $b$
goes to zero, making
$\sum\tau_i = \Theta(n^{1/2})$ if $b$ is small enough. However, taking the points ray by ray in a cyclic
order, going outwards along each ray, the contribution $\tau_i$ is a constant
for the first half of the points, making $\sum\tau_i = \Theta(n)$.

\medskip

\noindent{\bf B.}
Theorem 4 can be generalized to point sets satisfying weaker density conditions. Indeed, let $P$ be a set of $n$ points in the plane with diameter $D$ and minimum distance $d$. A randomized construction, similar to the one used in the proof Theorem~4, guarantees the existence of a stacking order $f$ such that $\vis(\mathcal
D(P),f)=\Omega(n/(D/d)^{2/3})$. This beats the $\Omega(n^{1/2})$ bound
mentioned in the Introduction as long as $D/d=o(n^{3/4})$.

\medskip

\subsubsection*{Acknowledgements} The authors express their gratitude to Radoslav Fulek and Andres Ruiz Vargas (EPFL), for many insightful discussions on the subject, as well as to the anonymous referees for their useful comments.

\end{document}